\newcommand*{\SINGLECOL}{}
\ifdefined\SINGLECOL
\documentclass[12pt, draftclsnofoot, onecolumn]{IEEEtran}
\else
\documentclass[10pt]{IEEEtran}
\fi
\usepackage[noadjust]{cite}
\usepackage{amsmath}
\usepackage{amssymb}
\usepackage{amsthm}
\usepackage{bbm}
\usepackage{bm}
\usepackage{amsfonts}
\usepackage{booktabs}
\usepackage{comment}
\usepackage{graphicx}
\usepackage[capitalize]{cleveref}
\usepackage{caption}
\usepackage{subcaption}
\usepackage{algorithm}
\usepackage{algpseudocode}
\usepackage{balance}

\newtheorem{proposition}{Proposition}
\renewcommand{\epsilon}{\varepsilon}
\usepackage{xcolor}
\definecolor{plotcol1}{rgb}{.894,.102,.110}
\definecolor{plotcol2}{rgb}{.215,.492,.719}
\definecolor{plotcol3}{rgb}{.301,.684,.289}
\definecolor{plotcol4}{rgb}{.594,.305,0.637}
\definecolor{plotcol5}{rgb}{1,.498,0}
\definecolor{plotcol6}{rgb}{1,1,.2}
\DeclareMathOperator{\E}{\mathbb{E}}
\DeclareMathOperator{\Var}{\mathrm{Var}}

\begin{document}
\title{Common Message Acknowledgments: \\ Massive ARQ Protocols for Wireless Access}
\author{Anders E. Kal{\o}r,~\IEEEmembership{Graduate Student Member,~IEEE},
  Rados\l{}aw Kotaba,~\IEEEmembership{Member,~IEEE},
  and Petar Popovski,~\IEEEmembership{Fellow,~IEEE}
\thanks{The work has been supported by the Danish Council for Independent Research, Grant Nr. 8022-00284B SEMIOTIC, and by the Villum Investigator Grant ``WATER'' from the Velux Foundation, Denmark.}%
\thanks{The authors are with the Department of Electronic Systems, Aalborg University, Denmark (email: \{aek,rak,petarp\}@es.aau.dk).}}

\maketitle

\begin{abstract}
  Massive random access plays a central role in supporting the Internet of Things (IoT), where a subset of a large population of users simultaneously transmit small packets to a central base station. While there has been much research on the design of protocols for massive access in the uplink, the problem of providing message acknowledgments back to the users has been somewhat neglected. Reliable communication needs to rely on two-way communication for acknowledgement and retransmission. Nevertheless, because of the many possible subsets of active users, providing acknowledgments requires a significant amount of bits. Motivated by this, we define the problem of massive ARQ (Automatic Retransmission reQuest) protocol and introduce efficient methods for joint encoding of multiple acknowledgements in the downlink. The key idea towards reducing the number of bits used for massive acknowledgments is to allow for a small fraction of false positive acknowledgments. We analyze the implications of this approach and the impact of acknowledgment errors in scenarios with massive random access. Finally, we show that these savings can lead to a significant increase in the reliability when retransmissions are allowed since it allows the acknowledgment message to be transmitted more reliably using a much lower rate.
\end{abstract}

\begin{IEEEkeywords}
  Automatic repeat request, feedback, internet of things, massive random access
\end{IEEEkeywords}

\section{Introduction}
A fundamental challenge in supporting the Internet of Things (IoT) is to enable grant-free, or uncoordinated, transmissions from a very large number of users~\cite{bockelmann16}. Furthermore, as the user activation is often triggered by physical phenomena, such as an event that generates sensory data, the traffic patterns are sporadic. Thus, at any instant, the resulting subset of active user that have something to transmit is random. This has initiated a large amount of research devoted to the design of random access schemes that can decode messages from a small random subset of users, often based on techniques derived from ALOHA~\cite{stefanovic13,paolini15} or compressed sensing~\cite{liu18,amalladinne20,fengler21}.

However, despite the great interest in transmission schemes for massive access, the problem of efficiently providing packet reception acknowledgments to a large number of users has been somewhat neglected. Yet, a message acknowledgment is often an useful signal for the application layer, and is necessary in order to implement retransmission schemes, referred to as Automatic Retransmission reQuest (ARQ). By allowing the transmission to terminate as soon as the message has been decoded, ARQ provides a mechanism for implicit rate adaptation. This can greatly increase the transmission reliability, especially in time-varying channels, such as those where the transmissions are subject to fading and interference from other users as is often the case in massive random access~\cite{caire01arq,wu10arq}. This has been exploited in several random access schemes, e.g., those inspired by rateless codes~\cite{stefanovic13,shirvanimoghaddam17}, which rely on message acknowledgments in order to work. Although these schemes require only a single bit of common feedback, it can be beneficial in practice to provide early feedback as soon any individual user is decoded in order to minimize the interference from imperfect SIC. In essence, our work treats the problem of massive ARQ and thus expands the problem space of the area of massive wireless access.

Compared to grant-based access scenarios, where the BS can send an acknowledgment to a user using a single bit (ACK/NACK), acknowledging a set of users decoded from a grant-free access scenario requires the BS to encode the user identities or some other information that can be used to identify the users that it wants to acknowledge. Encoding the user identifiers requires a significant number of bits when the number of users is large. A na\"ive attempt to encode acknowledgments to $K$ users out of a total of $N$ users could be to simply concatenate the identifiers of the $K$ users and transmit an acknowledgment packet of $K\log_2(N)$ bits. However, this approach has two significant drawbacks. First, it requires a variable-length packet, which may not be desired in many protocols that rely on time-division multiplexing. Second, as we will show, it is possible to significantly reduce the number of bits required to encode the acknowledgments by applying source coding techniques to jointly encode the acknowledgments for all $K$ users.

In order to achieve substantial reductions in the acknowledgment message length, our key proposal is to allow for a small but non-negligible fraction of \emph{false positive} acknowledgments, i.e., that a transmitting user erroneously determines that its message is among the acknowledged messages. Such errors are atypical in existing systems, which are often designed to suppress false positives using error detection mechanisms such as cyclic redundancy checks (CRCs), or by encoding the feedback message such that false positives are very rare at the cost of a larger \emph{false negative} probability~\cite{ostman21}. The reason for this is that false positive acknowledgments remain undetected after a transmission round and thus can be hard to resolve and lead to unreliable communication. This is in contrast to false negative errors, which may for instance occur if there are errors in the CRC but the message is intact, and for which the cost is merely an unnecessary retransmission. In this sense, a false positive acknowledgment can be ``fatal'' as it leads to the situation where the user believes that its message was successfully received by the BS when it in fact was lost. Despite this, it turns out that introducing false positives can somewhat surprisingly increase the reliability in the massive access scenario with ARQ. Nevertheless, in cases where false positive acknowledgments cannot be tolerated, we note that they can be detected and subsequently resolved using mechanisms at higher layers such as packet numbering, or by piggybacking an acknowledgment bit with the next downlink message at the cost of a detection delay.

\subsection{Background and Related Work}
The idea of joint feedback encoding in the massive random access scenario has been exploited in~\cite{kang21} to design feedback for collision-free scheduling of the users that are active in the uplink. However, they assume that the uplink was error-free, which makes the use of acknowledgments obsolete in the first place.
Joint encoding of acknowledgments has been studied in~\cite{yang16csack}, where a method based on compressed sensing is proposed. The overall idea is that base station constructs a sparse binary vector in $\mathbb{R}^N$ where the non-zero entries represent the decoded users, and projects it to a lower dimension using a sensing matrix. The compressed representation is then transmitted over the channel and the users try to recover the sparse vector of decoded users using a variant of approximate message passing designed to suppress false negatives. While this method shares many similarities with the ones that we study, the main disadvantage is that decoding algorithms are computationally hard, if not infeasible, for the values of $N$ that we consider, e.g., $N=2^{32}$ or $N=2^{64}$. Joint acknowledgments have also been used in the S-band Mobile Interactive Multimedia (S-MIM) satellite system~\cite{scalise13smim}, where acknowledgment messages can be constructed by concatenating the CRCs of the uplink messages, thereby introducing false positive acknowledgments. However, there is no rigorous analysis of the scheme and, as we will show, the approach is sub-optimal in terms of the fraction of false positives produced for a given message length.

The trade-off between false positives and false negatives acknowledgments has been studied thoroughly for ARQ and hybrid ARQ (HARQ) in the single-user setting, where only a single-bit acknowledgment message is needed. The general conclusion from these studies is that the probability of false positive acknowledgments needs to be significantly smaller than the uplink error probability, since they, contrary to uplink failures and false negative acknowledgments, cannot be repaired by a retransmission~\cite{draper08,wu11}. The same result holds in the finite blocklength regime, where the downlink message should be designed to achieve low false positive probability, but the false negative probability should be held constant and relatively large independently of the total reliability requirement~\cite{ostman21}. Although the reliability of the feedback is generally less important when the maximum number of transmissions is small since the uplink reliability plays a more significant role in determining the total reliability, these results hold even with as few as two transmission rounds~\cite{wu11}. Nevertheless, because of the large feedback message required in massive access regime and the fact that the feedback acknowledges multiple users, these results cannot be directly transferred to the scenario that we consider.

Finally, we note that the practical methods that we will consider rely extensively on the use of hash functions, which have also been widely used in the design of uplink schemes for random access. In~\cite{starzetz2009hashing} hashing is used to resolve collisions, \cite{pratas16bloom,zhang18bloom,luo20bloom} consider various applications of Bloom filters, which rely heavily on hashing, and in~\cite{vem19ura} hashing is used to decorrelate interference based on the users' messages in unsourced random access.

\subsection{Contributions and Paper Organization}
The paper has three main contributions. \emph{First,} it is the core idea of allowing false positives. We show that by allowing a small fraction of false positive acknowledgments, the number of bits required for the feedback message can be significantly reduced, while the introduction of false negative acknowledgment does not yield comparable savings. Furthermore, we present various practical methods for efficient encoding of acknowledgments with false positives. \emph{Second,} we study how the distribution of the number of active users impacts the feedback message, and derive closed-form bounds on the false positive probability based on the first and second moments of the distribution. \emph{Third,} we quantify the impact of false positive acknowledgments on the overall reliability by studying transmission schemes with multiple transmission rounds. In this context, we show that the message length reduction that results from introducing false positives allows the feedback to be transmitted with a much lower rate, which in turn results in a significant increase in the overall reliability.

We note that, in both grant-free and grant-based settings and irrespectively of the feedback encoding, feedback can be designed either in an adaptive or non-adaptive manner. Adaptive feedback schemes are intrinsically non-trivial due to the half-duplex structure of most wireless systems, which requires the feedback instants to be either fully pre-planned or controlled by the transmitting user. In this paper, the focus is on the feedback message and assume that the feedback moments are known.

The remainder of the paper is organized as follows. \cref{sec:sysmodel} introduces the overall system model. Information-theoretic bounds for a fixed number of decoded users are presented in \cref{sec:inftheorybounds}, and \cref{sec:practicalschemes} introduces and analyzes a number of practical encoding schemes for this setting. \Cref{sec:random_activations} analyzes the case in which the number of decoded users is random, and the case with multiple transmission rounds is analyzed in \cref{sec:feedback_retx}. Finally, numerical results are presented in \cref{sec:num_res} and the paper is concluded in \cref{sec:conclusion}.

\section{System Model}\label{sec:sysmodel}
We consider a typical massive access scenario comprising a single base station (BS) that serves a massive set of potentially active users $[N]=\{1,2,\ldots,N\}$ (typically $N$ is in the order of thousands). As is often the case in practical systems, we assume that each of the $N$ users has a unique identifier known to both the users and the BS. If the BS requires an initial handshake procedure for users to join the network, then $N$ corresponds to the number of users associated with the BS, and $N$ will be in the order of thousands (for instance in NB-IoT, the Cell Radio Network Temporary Identifier (C-RNTI) can identify up to $N=65523$ users~\cite{ts36321}). On the other hand, if no such procedure exists, then each user can have a globally unique identifier, such as a MAC address, and $N$ will be in the order of $2^{32}$ to $2^{64}$.

We assume a general frame structure in which the air interface is divided into a number of recurring random access opportunities in which a random subset $\mathcal{A}\subseteq [N]$ of users are active and transmit their messages in the uplink. The uplink transmission is followed by a downlink feedback message, multicasted by the BS, that provides acknowledgments to the users that the BS decoded in the uplink. Users that receive an acknowledgment have completed their transmission, while users that do not receive an acknowledgment are allowed to retransmit up to $L-1$ times. We assume that each uplink message contains the transmitter's identifier such that the BS is able to determine the identity of the sender upon decoding of the packet\footnote{We make this assumption for clarity of presentation, but the analysis holds even if there is no identity (e.g., as in unsourced random access~\cite{polyanskiy17}) by treating the messages as temporary identities. In that case $N$ corresponds to the number of distinct messages.}.
In general, the number of active users is random and typically will be much smaller than $N$. The set of active users (including its cardinality) is unknown to the BS, which tries to recover it from the received signals. We denote the set of recovered users by $\mathcal{S}=\{s_1,s_2,\ldots,s_K\}$, where $s_k\in[N]$ and assume that, conditioned on $K$, $\mathcal{S}$ is drawn uniformly from the set of all $K$-element subsets of $[N]$, denoted $[N]_K=\{\mathcal{K}\subseteq [N] \mid |\mathcal{K}|=K\}$. 
Due to decoding errors, $\mathcal{S}$ may be different from the actual set of active users $\mathcal{A}$. We denote by $\epsilon_{\mathrm{ul},n}$ the probability that a transmitting user $n$ is not decoded. This probability typically depends on the random access mechanisms as well as the value of $K$, the signal-to-noise ratios (SNRs) of the transmitting users, etc.

To make the transmitters aware of potential errors and to ensure reliable transmission, the BS transmits a common $B$-bit feedback message after the random access opportunity that allows the users to determine whether their own identifier is a member of $\mathcal{S}$. The message is transmitted through a packet erasure channel so that the packet is received by user $n$ with probability $1-\epsilon_{\mathrm{dl},n}$. The erasure probability depends on the SNR of the individual users, the channel, and the transmission rate of the feedback\footnote{In  practice, an erasure channel represents the case where the decoder can detect if a packet is decoded incorrectly, e.g., through an error-detecting code. The size of such a code with negligible false positive probability is small compared to the size of the feedback message, and thus we will ignore the overhead it introduces.}. Formally, such a feedback scheme is defined by an encoder, the downlink channel, and a set of decoders, one for each user. We define the encoder as
\begin{equation}
  f: [N]_K \to \{0,1\}^B,
\end{equation}
and the erasure channel as
\begin{align}
  \Pr(Y_n=X\mid X\in \{0,1\}^B)&=1-\epsilon_{\mathrm{dl},n},\\
  \Pr(Y_n=\mathrm{e}\mid X\in \{0,1\}^B)&=\epsilon_{\mathrm{dl},n},
\end{align}
where $X$ is the packet transmitted by the BS, $Y_n$ is the packet received by user $n$, and $\mathrm{e}$ denotes an erasure. Finally the individual decoders are defined as
\begin{equation}
  g_n: \{0,1\}^B \cup \mathrm{e} \to \{0,1\},\qquad n=1,\ldots,N
\end{equation}
which output $1$ if user $n$ is believed to be a member of $\mathcal{S}$ and $0$ otherwise (throughout the paper we will assume that the decoder outputs $0$ if it observes an erasure). Both the encoder and the decoders may depend on $K$ (which is random), but this dependency can be circumvented by encoding $K$ separately in the feedback message at an average cost of approximately $H(K)$ bits where $H(\cdot)$ is the entropy function\footnote{A pragmatic alternative when the activation distribution is unknown would be to assume that at most $K'$ users can be decoded simultaneously and then dedicate a fixed number of $\log_2(K')$ bits to describe $K$.}. As we will see, this overhead is minimal when compared to the number of bits required to encode the acknowledgments in most settings of practical interest. We will refer to $B$ as the message length of a scheme.

To characterize the performance of a feedback scheme, we define the false positive (FP) probability, denoted $\epsilon_{\mathrm{fp}}$, as the probability that a user whose uplink message was not decoded $n\in \mathcal{A}\setminus \mathcal{S}$ erroneously concludes that it belongs to $\mathcal{S}$
\begin{align}
  \epsilon_{\mathrm{fp}}&=\E\left[\Pr\left(g_n\left(f\left(\mathcal{S}\right)\right)=1 \mid n\in \mathcal{A}\setminus \mathcal{S}\right) \mid K\right],
\end{align}
where the expectation is taken over $n$ and the distribution $p(\mathcal{S}|K)$ (but not the channel, which we will treat independently). Similarly, we define the false negative (FN) probability $\epsilon_{\mathrm{fn}}$ as the probability that a decoded user $n\in\mathcal{S}$ incorrectly concludes that it does not belong to the set
\begin{align}
  \epsilon_{\mathrm{fn}}&=\E\left[\Pr\left(g_n\left(f\left(\mathcal{S}\right)\right)=0 \mid n\in \mathcal{S}\right) \mid K\right].
\end{align}
Note that these definitions ignore the channel, and thus allow us to treat $\epsilon_{\mathrm{fp}}$ and $\epsilon_{\mathrm{fn}}$ independently of the event of an erasure. Note also that both $\epsilon_{\mathrm{tp}}$ and $\epsilon_{\mathrm{fp}}$ are conditioned on $K$. We discuss the case when $K$ is random further in \cref{sec:random_activations}.

Using these definitions, we denote by $B^*$ the minimum message length $B$ required for a scheme with $K$ active users out of $N$ that achieves false positive and false negative probabilities at most $\epsilon_{\mathrm{fp}}$ and $\epsilon_{\mathrm{fn}}$, respectively.

\section{Information Theoretic Bounds}\label{sec:inftheorybounds}
We first consider the source coding part of the problem, namely the functions $f$ and $g_n$ defined previously, while for clarity ignoring the erasure channel between the BS and the users. Specifically, in this section we derive information theoretic bounds on the minimum message length $B$ required for the feedback message. To start with, we treat $K$ as constant, and thus neglect the bits required to encode $K$ in the message, which would be the same for all schemes.

\subsection{Error-Free Coding}
We first consider error-free schemes, i.e., schemes that have $\epsilon_{\mathrm{fp}}=\epsilon_{\mathrm{fn}}=0$. A na\"ive construction of the feedback message is to concatenate the $K$ identifiers in $\mathcal{S}$ to produce a message of $B=K\log_2(N)$ bits. However, such a construction is sub-optimal because there are only $\binom{N}{K}$ subsets of $K$ users, and $\log_2\binom{N}{K}$ bits are sufficient to distinguish each subset. This leads to the feedback message length
\begin{align}
  B_{\text{error-free}}^* &= \left\lceil\log_2\binom{N}{K}\right\rceil\label{eq:bound_err_free}\\
  &\ge \left\lceil K\log_2(N/K)\right\rceil,\label{eq:lower_bound_err_free}
\end{align}
where the inequality follows from $\binom{N}{K}\ge (N/K)^K$. This bound can be achieved efficiently in practice using e.g., enumerative source coding~\cite{schalkwijk72,cover73}.

\subsection{Encoding with Bounded Errors}
The required feedback message length for the error-free encoding scales with the logarithm of $N$, which can be significant when $N$ is in the order of $2^{32}$ or $2^{64}$. One way to reduce the impact of $N$ is to allow for non-zero false positive and false negative probabilities. To do so, a feedback message must acknowledge at most $K+\epsilon_{\mathrm{fp}}N$ users, out of which at least $(1-\epsilon_{\mathrm{fn}})K$ must be in $\mathcal{S}$. For $\epsilon_{\mathrm{fp}}<0.5$ (which is typically the region of interest), it can be shown using combinatorial arguments that~\cite{pagh01} (see \cref{app:inf_lowerbound} for details)
\begin{align}
  \begin{split}
    B_{\text{fp,fn}}^* &\ge \log_2\binom{N}{K}\\
    &\qquad-\log_2\left(K\binom{\lfloor\epsilon_{\mathrm{fp}}N\rfloor + K}{\lceil(1-\epsilon_{\mathrm{fn}})K\rceil}\binom{N}{\lfloor\epsilon_{\mathrm{fn}}K\rfloor}\right)
  \end{split}\label{eq:lower}\\
  \begin{split}
    &\ge K\log_2\left(\frac{1}{\epsilon_{\mathrm{fp}}+\frac{K}{N}}\right) -
    K\log_2\left(\frac{e}{1-\epsilon_{\mathrm{fn}}}\right)\\
    &\quad - \epsilon_{\mathrm{fn}}K\log_2\left(\frac{1-\epsilon_{\mathrm{fn}}}{\epsilon_{\mathrm{fn}}\left(\epsilon_{\mathrm{fp}}+\frac{K}{N}\right)}\right) - \log_2(K),
  \end{split}\label{eq:lower2}
  \end{align}
where \eqref{eq:lower2} follows from the observation that rounding cannot decrease the message length and the inequality $\left(\frac{N}{K}\right)^K\le \binom{N}{K}\le (eN/K)^K$. Note that if $K$ is held constant, the bound is independent of $N$ as $N\to\infty$.

The introduction of false positives has the potential to offer significantly greater gains than false negatives. In particular, when $\epsilon_{\mathrm{fn}}$ is small as is typically desired, the required message length is only negligibly smaller than the one required if no false negatives were allowed. The reason for this is that the set of potential false negatives, $\mathcal{S}$, is much smaller than the set of potential false positives $[N]\setminus\mathcal{S}$. When $\epsilon_{\mathrm{fn}}=0$, the bound can be tightened further as~\cite{carter78,dietzfelbinger08}
\begin{align}
  B_{\text{fp}}^* &\ge K\log_2\left(1/\epsilon_{\mathrm{fp}}\right) - \frac{\log_2(e)(1-\epsilon_{\mathrm{fp}})K^2}{\epsilon_{\mathrm{fp}}N + (1-\epsilon_{\mathrm{fp}})K},\label{eq:lower_nofn}
\end{align}
where the last term vanishes as $N\to\infty$ for fixed $K$.

A (non-constructive) achievability bound for the case with $\epsilon_{\mathrm{fn}}=0$ and $K\le N\epsilon_{\mathrm{fn}}$ was provided in~\cite{carter78}. The overall idea is to sequentially generate all $\lfloor N\epsilon_{\mathrm{fp}}\rfloor$-element subsets of $[N]$, and then transmit the index of the first subset that includes all $K$ elements in $\mathcal{S}$. Using a set-cover theorem by Erd\H{o}s and Spencer~\cite[Theorem 13.4]{erdos74}, they show that the number of $\lfloor N\epsilon_{\mathrm{fp}}\rfloor$-element subsets required to cover all $K$-element subsets of $[N]$, denoted $M(N,\lfloor N\epsilon_{\mathrm{fp}}\rfloor,K)$, is upper bounded by
\begin{align}
  M(N,\lfloor N\epsilon_{\mathrm{fp}}\rfloor,K) \le \left(1+\ln\binom{\lfloor N\epsilon_{\mathrm{fp}}\rfloor}{K}\right)\frac{\binom{N}{K}}{\binom{\lfloor N\epsilon_{\mathrm{fp}}\rfloor}{K}}.
\end{align}
Taking the logarithm and bounding the binomial coefficients gives the following upper bound on the required feedback message length
\begin{align}
  \begin{split}
    B_{\text{fp}}^* &\le \log_2\binom{N}{K}-\log_2\binom{\lfloor N\epsilon_{\mathrm{fp}}\rfloor}{K}\\
    &\qquad + \log_2\left(1+\ln\binom{\lfloor N\epsilon_{\mathrm{fp}}\rfloor }{K}\right)
  \end{split}\label{eq:upper_nofn1}\\
  \begin{split}
    &\le K\log_2\left(e/\epsilon_{\mathrm{fp}}\right)
+ \log_2\left(1+K\ln\left(\frac{N\epsilon_{\mathrm{fp}}}{K}\right)\right).
  \end{split}\label{eq:upper_nofn2}
\end{align}
Note that this also serves as an upper bound for the case with $\epsilon_{\mathrm{fn}}>0$. By comparing \eqref{eq:upper_nofn2} to the lower bound in \eqref{eq:lower_nofn}, it can be seen that the bounds are tight within an additive term $O(\log\log N)$ as $N\to\infty$ and $K$ is held constant, i.e., for sufficiently large $N$,
\begin{align}
  B_{\text{fp}}^*=K\log_2(1/\epsilon_{\mathrm{fp}})+ O(\log\log N),\label{eq:asymptotic_bound}
\end{align}
which is lower than the error-free scheme in \cref{eq:lower_bound_err_free} when $\epsilon_{\mathrm{fp}}\ge K/N$. To illustrate the potential gain of introducing a small fraction of false positives, suppose $N=2^{32}$ and $K=100$. Encoding the acknowledgment in an error-free manner requires approximately $B=\log_2\binom{2^{32}}{100}\approx 2675$ bits, while only $B=100\log_2(100)\approx 664$ bits are required if we can tolerate $\epsilon_{\mathrm{fp}}=0.01$, and $B=100\log_2(10000)\approx 1329$ bits for $\epsilon_{\mathrm{fp}}=0.0001$.

\section{Practical Schemes}\label{sec:practicalschemes}
In this section, we analyze a number of practical designs of $f$ and $g_n$, and compare them to the bounds derived in the previous section. Motivated by the fact that false negatives provide little reduction in the feedback message length, we will restrict ourselves to schemes with $\epsilon_{\mathrm{fn}}=0$. Furthermore, unless otherwise noted, we will again assume that the number of decoded users $K$ is fixed and defer the discussion of random activations to \cref{sec:random_activations}.

\subsection{Optimal Scheme based on Linear Equations}\label{sec:le}
We consider construction based on solving a set of linear equations in a Galois field, first proposed in~\cite{dietzfelbinger08,porat2009optimal}. To simplify the analysis, we will assume that we have access to both a fully random hash function, $h_1$, and a universal hash function, $h_2$. An $(n,b)$-family of fully random hash functions is a family of functions $h_1:[n]\to [b]$ such that for each value $x\in[n]$, it outputs a value chosen uniformly at random from $[b]$. Similarly, an $(n,b)$-family of universal hash functions is a family of functions $h_2:[n]\to [b]$ such that for a hash function $h_2$ chosen uniformly at random and for any two distinct values $x,y\in[n]$, $\Pr(h_2(x)=h_2(y))\le 1/b$. While fully random  hash functions have desirable properties, they are not practical as they require an exponential number of bits to store. Nevertheless, in many practical problems the fully random hash function can be replaced by a simpler hash function with a negligible penalty, especially when the input is randomized~\cite{chung_mitzenmacher_vadhan_2021}. On the other hand, universal hash functions can be implemented efficiently in practice. Regardless of the type of hash function, the event that $h(x)=h(y)$ is typically referred to as a \emph{collision}. 

Returning to the encoding scheme, suppose we have a fully random hash function $h_1:[N]\to \mathrm{GF}(2^{\lceil\log_2(1/\epsilon_{\mathrm{fp}})\rceil})^K$, i.e., mapping from $[N]$ to $K$-element vectors in $\mathrm{GF}\left(2^{\lceil\log_2(1/\epsilon_{\mathrm{fp}})\rceil}\right)$, and a universal hash function $h_2: [N]\to 2^{\lceil\log_2(1/\epsilon_{\mathrm{fp}})\rceil}$. Then, we can construct the equation $h_1(s_k)\cdot z=h_2(s_k)$ in $\mathrm{GF}(2^{\lceil\log_2(1/\epsilon_{\mathrm{fp}})\rceil})$, where $\cdot$ is the inner product. By constructing an equation for each $s_k\in\mathcal{S}$, we obtain the set of $K$ equations with $K$ variables $H_1z=h_2$, where $H_1\in\mathrm{GF}(2^{\lceil\log_2(1/\epsilon_{\mathrm{fp}})\rceil})^{K\times K}$ is the matrix of rows $h_1(s_1),\ldots,h_1(s_K)$ and $h_2\in\mathrm{GF}(2^{\lceil\log_2(1/\epsilon_{\mathrm{fp}})\rceil})^{K}$ is the vector with elements $h_2(s_1),\ldots,h_2(s_K)$. In order for this system to have a solution, we require $H_1$ to be full rank. It can be shown that this happens with probability at least $1-\frac{1}{2^{\lceil\log_2(1/\epsilon_{\mathrm{fp}})\rceil}-1}$~\cite{porat2009optimal}, which is large for the values of $\epsilon_{\mathrm{fp}}$ that we consider (e.g., greater than $0.99$ for $\epsilon_{\mathrm{fp}}=0.01$ and greater than $0.9999$ for $\epsilon_{\mathrm{fp}}=0.0001$). By repeating the procedure with new hash functions $h_1', h_1'',\ldots$, the probability of generating a matrix with full rank can be made arbitrarily large at the cost of a message length penalty required to store the number of trials. In practice, this penalty is negligible compared to the total size of the message. For instance, with $\epsilon_{\mathrm{fp}}=0.01$ and up to 16 trials, requiring only four additional bits, the failure probability is in the order of $10^{-34}$.

Provided that the resulting matrix $H_1$ has full rank, we can obtain the solution $z$ to the set of equations. A decoder can then check whether an identifier $n$ is contained in the set by simply checking if $h_1(s_k)\cdot z=h_2(s_k)$. Thus, neglecting the potential overhead caused by repeating the hashing procedure, only the vector $z$ needs to be communicated, which contains $K$ entries of $\lceil\log_2(1/\epsilon_{\mathrm{fp}})\rceil$ bits each. Combining these observations, we obtain the feedback message length
\begin{align}
  B_{\mathrm{le}} = K\lceil\log_2(1/\epsilon_{\mathrm{fp}})\rceil,\label{eq:r_le}
\end{align}
which, disregarding the rounding, matches the asymptotic information theoretic bound in \cref{eq:asymptotic_bound}. As we will see in \cref{sec:num_res}, the practical performance matches closely with the bound.

It is worth noting that finding $z$ uses Gaussian elimination, which requires $O(K^3)$ operations. This makes the method infeasible for large $K$. However, the operation can be performed fast as long as $K$ is at most in the order of hundreds, which is the main interest in this paper. When $K$ is larger, the construction can be improved by introducing sparsity in $H_1$ at the cost of a small overhead, see e.g.,~\cite{porat2009optimal,dietzfelbinger08,BuRR2021}. Alternatively, one of the schemes presented in the following can be applied.

\subsection{Bloom Filter}
A Bloom filter~\cite{bloom70} uses $T$ independent universal hash functions $h_i: [N]\to [B]$ for $i=1,\ldots,T$, and is constructed by setting the message bits at positions $\{h_i(s_k) \mid s_k\in\mathcal{S}, i=1,\ldots,T \}$ equal to '1' and the remaining bits equal to '0'. In order to decode the message and check whether an identifier $n$ belongs to the set, the decoder simply checks if the bits at positions $\{h_i(n) \mid i=1,\ldots,T \}$ are equal to '1'. Clearly, the decoder can only observe false positives and not false negatives.

It can be shown that the minimum false positive probability is obtained when the probability that a given bit is '1' is exactly $1/2$, and that $T$ should be chosen as $T=(B/K)\ln(2)$ to achieve this~\cite{broder04} (in practice, one needs to round to the nearest integer). The resulting false positive probability is non-trivial, but can be approximated as~\cite{broder04}
\begin{align}
  \epsilon_{\mathrm{fp}}\approx 2^{-\lceil(B/K)\ln(2)+0.5\rceil}.
\end{align}
By assuming equality in the approximation we obtain
\begin{equation}
  B_{\mathrm{bf}} = K\log_2(e)\log_2(1/\epsilon_{\mathrm{fp}}),\label{eq:r_bf}
\end{equation}
revealing that Bloom filter is approximately within a factor $\log_2(e)\approx 1.44$ of the asymptotic lower bound in \cref{eq:asymptotic_bound}. Nevertheless, contrary to the scheme based on linear equations, a Bloom filter constructed to be optimal for a certain value of $K$ can be used with any other value of $K$ (although sub-optimally).

\subsection{Hash Concatenation}
As a final scheme, we consider the concatenation of hashes of each decoded identifier, which is practically equivalent to the one used in S-MIM~\cite{scalise13smim} (where CRCs are used as hash functions). Specifically, we consider a universal hash function $h:[n]\to [2^b]$, so that the message constructed by concatenating the hashes of each of the $K$ users has a length of $Kb$ bits. The probability that the hash of an arbitrary user that is not among the $K$ decoded users collides with any of the decoded users is
\begin{align}
\epsilon_{\mathrm{fp}}&=1-\left(1-\frac{1}{2^b}\right)^K.
\end{align}
By rearranging and ceiling to ensure $b$ is integer we obtain $b=\left\lceil-\log_2\left(1-(1-\epsilon_{\mathrm{fp}})^{1/K}\right)\right\rceil$. The feedback message length is then bounded by
\begin{align}
  B_{\mathrm{hc}}&=K\left\lceil-\log_2\left(1-(1-\epsilon_{\mathrm{fp}})^{\frac{1}{K}}\right)\right\rceil\\
  &\ge K\left\lceil-\log_2\left(1-e^{-\frac{\epsilon_{\mathrm{fp}}}{K(1-\epsilon_{\mathrm{fp}})}}\right)\right\rceil\\
  &\ge K\left\lceil-\log_2\left(\frac{\epsilon_{\mathrm{fp}}}{K(1-\epsilon_{\mathrm{fp}})}\right)\right\rceil\\
  &=K\left\lceil\log_2\left(1/\epsilon_{\mathrm{fp}}\right) + \log_2\left(K(1-\epsilon_{\mathrm{fp}})\right)\right\rceil,\label{eq:r_trunc}
\end{align}
where the first inequality follows from $1-x\ge e^{-\frac{x}{1-x}}$ for $0\le x<1$ and that $-\log_2(1-x)$ is monotonically increasing for $x<1$, and the second inequality is due to $1-e^{-x} \le x$ for $x\ge 0$ and that $-\log_2(x)$ is monotonically decreasing.
The last term in \cref{eq:r_trunc} is strictly positive when $K>\frac{1}{1-\epsilon_{\mathrm{fp}}}$, which is the case for the values of $K$ and $\epsilon_{\mathrm{fp}}$ that we are interested in. Thus, the scheme requires approximately $K\log_2(K(1-\epsilon_{\mathrm{fp}}))$ bits more than the lower bound in \cref{eq:asymptotic_bound}. Note that when the identifiers of the active users are drawn uniformly at random, the hash function can be replaced by a simple truncation of the identifiers to $b$ bits. Furthermore, as with the Bloom filter, this scheme can be used sub-optimally without knowing the instantaneous value of $K$, as long as the number of active users is less than $B_{\mathrm{hc}}/b$.

\subsection{Comparison}
\Cref{fig:bounds} compares the feedback message lengths for the practical schemes to the information theoretic bounds for $\epsilon_{\mathrm{fp}}=0.01$ and $\epsilon_{\mathrm{fp}}=0.0001$ when $N=2^{32}$. As expected, the upper (UB) and lower (LB) information theoretic bounds are very tight (within 14 bits in the considered range), and the scheme based on linear equations performs very close to these bounds. On the other hand, both the Bloom filter and hash concatenation require significantly more bits, but are still better than the error-free schemes, and have the advantage that they do not require knowledge of the instantaneous value of $K$ in order to be decoded.

\begin{figure}
  \centering
  \includegraphics{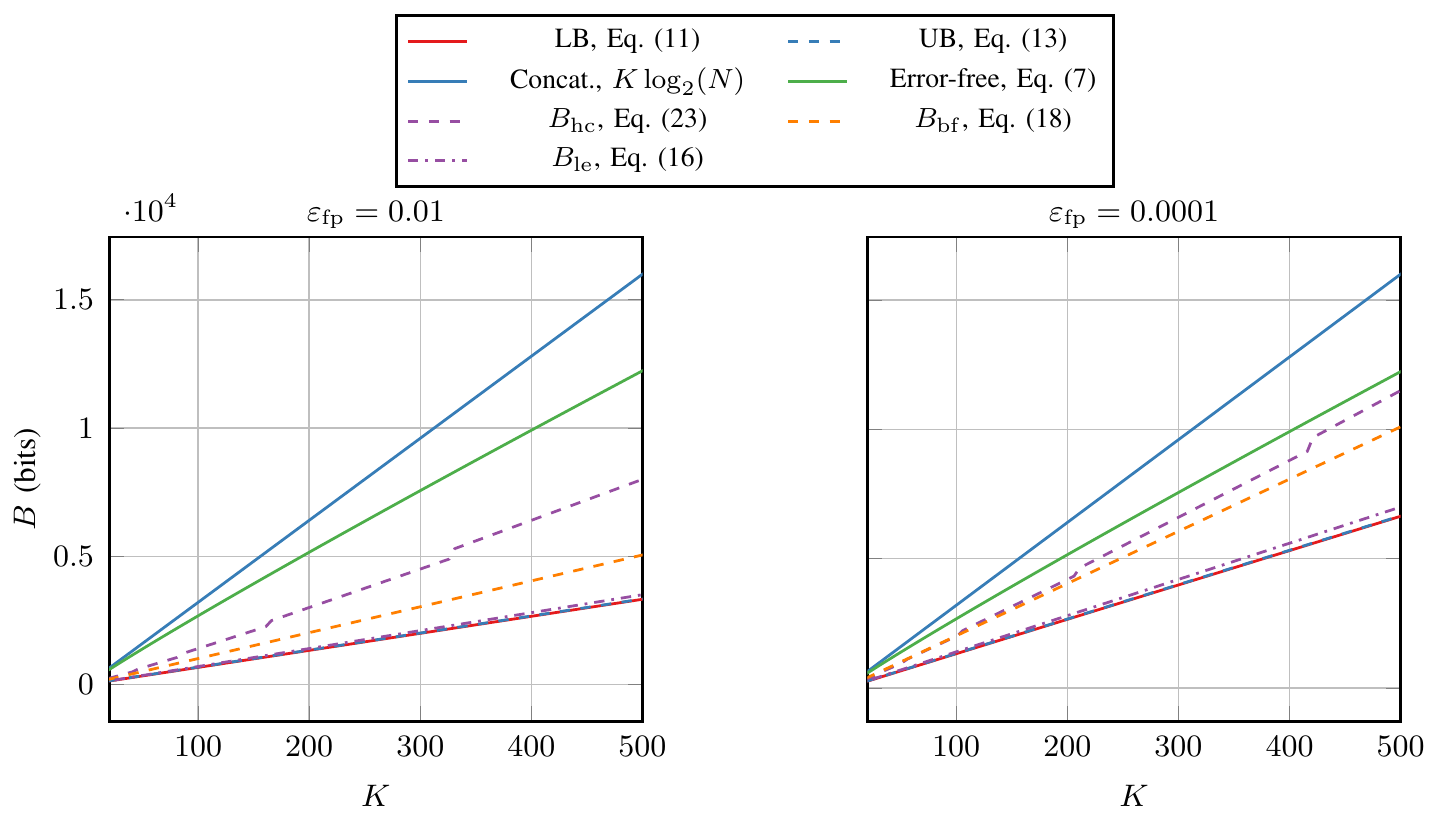}
    \caption{Message length, $B$, required to provide acknowledgment feedback for $N=2^{32}$ and $\epsilon_{\mathrm{fn}}=0$ with $\epsilon_{\mathrm{fp}}=0.01$ and $\epsilon_{\mathrm{fp}}=0.0001$.}
    \label{fig:bounds}
\end{figure}

\section{Random User Activity}\label{sec:random_activations}
So far, we have assumed that $K$ is fixed and optimized the feedback for a specific value of $K$. In practice, the number of active devices is random and unknown to both the BS and the devices, and thus the number of messages produced by the random access decoder at the BS, $K$, is in general also random. Furthermore, in the case where the users that were not decoded retransmit, $K$ may even be correlated over time and depend on the reliability of the feedback message itself. However, to simplify the analysis, we will here assume that $K$ is independent across frames and drawn from the distribution $p(K)$.

Construction and decoding of optimal feedback scheme based on linear equations require the number of users $K$ to be specified. On the other hand, the Bloom filter and hash concatenation schemes can be encoded and decoded without knowledge of the instantaneous $K$, but at the cost of a larger fraction of false positives. In this section, our focus is on the optimal scheme, and hence, we assume that the instantaneous value of $K$ is included in the feedback message, which incurs only a very small overhead.
To illustrate, suppose the random access mechanism is designed to support at most $K'=1024$ simultaneously active users, then a $10$-bit overhead is required to encode $K$. On the other hand, if the desired false positive probability is $\epsilon_{\mathrm{fp}}=0.001$, then approximately $\log_2(0.001)\approx 10$ bits are required per user in the feedback message, so the overhead introduced by encoding $K$ merely corresponds to encoding one additional user. When more than $K'$ users are active, we accept that the error probability can be arbitrarily high. In that case, we may decide to pick a random subset comprising $K'$ of the $K>K'$ decoded users at the cost of $K-K'$ false negatives.

If we allow the feedback message to have a variable length, then we can achieve the desired $\epsilon_{\mathrm{fp}}$ (and $\epsilon_{\mathrm{fn}}$) as long as $K\le K'$ without significant over-provisioning when $K<K'$ by adjusting the message length to $K$. %
However, the random user activity has a more significant impact on the performance when the length of the feedback message needs to remain fixed for any value of $K$, e.g., due to protocol constraints, as the error probabilities depends on the instantaneous value of $K$. It seems reasonable in this case to optimize message length either based on the average false positive/negative probabilities or by the probability that the these probabilities exceed some thresholds $\tilde{\epsilon}_{\mathrm{fp}}$ and $\tilde{\epsilon}_{\mathrm{fn}}$. Assuming for clarity that $\epsilon_{\mathrm{fn}}=0$, we can formalize the first case by defining the message length selection rule
\begin{align}
  B=\inf\left\{B'\ge 0: \E_{K\sim p(K)}[\epsilon_{\mathrm{fp}}(K,B')]\le \tilde{\epsilon}_{\mathrm{fp}}\right\},\label{eq:B_selection_expected}
\end{align}
where $\epsilon_{\mathrm{fp}}(K,B')$ is the false positive probability achieved with $K$ users and a message length of $B'$ bits, and $\tilde{\epsilon}_{\mathrm{fp}}$ is the specified false positive probability target. Similarly, for the second case we have
\begin{align}
  B=\inf\left\{B'\ge 0: \Pr(\epsilon_{\mathrm{fp}}(K,B')> \tilde{\epsilon}_{\mathrm{fp}}) \le \delta\right\},\label{eq:B_selection_exceed}
\end{align}
where $\delta$ specifies the maximum allowed probability that the false positive probability exceeds $\tilde{\epsilon}_{\mathrm{fp}}$.

Computing these feedback message lengths requires complete knowledge of the distribution of $K$, which is often not available. Instead, we proceed by deriving bounds based on the first moments of $p(K)$ using for simplicity the asymptotic expression for the feedback message length given in \cref{eq:asymptotic_bound} for $\epsilon_{\mathrm{fn}}=0$, which is accurate for large $N$. We note, however, that bounds can be extended to any of the practical schemes by first bounding the rounding error. For instance, for the scheme based on linear equations, we have $B_{\mathrm{le}}=K\lceil\log_2(1/\epsilon_{\mathrm{fp}})\rceil\le K(\log_2(1/\epsilon_{\mathrm{fp}})+1)$, which is straightforward to bound using the same methodology used in the following bounds. We first present an upper bound on the expected false positive $\bar{\epsilon}_{\mathrm{fp}}=\E_{K\sim p(K)}[2^{-B/K}]$.

\begin{proposition}\label{prop:expected_bound}
  Let the number of decoded users $K$ be random with mean $\bar{K}=\E[K]$ and variance $\Var[K]$. Then for a given message length $B$ the expected false positive probability $\bar{\epsilon}_{\mathrm{fp}}$ is upper bounded as
  \begin{align}
    \bar{\epsilon}_{\mathrm{fp}}< 2^{-B/\bar{K}} + \frac{2.66\Var[K]}{B^2}.\label{eq:expected_bound}
  \end{align}
\end{proposition}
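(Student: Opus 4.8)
The plan is to write $\bar{\epsilon}_{\mathrm{fp}}=\E[f(K)]$ for the smooth function $f(x)=2^{-B/x}=e^{-B\ln(2)/x}$ on $x>0$, and to control this expectation by a second-order expansion about the mean $\bar{K}$. The first thing I would record is that $f$ is \emph{not} globally convex in $x$: computing $f''$ shows it changes sign, being positive for small $x$ and negative for large $x$. Hence Jensen's inequality cannot be invoked directly, and no pure convexity argument yields the claim. This sign change is precisely why the bound must carry an explicit variance correction rather than reducing to $f(\bar{K})$ alone.

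The core step is a Taylor expansion with Lagrange remainder: for each realization of $K$ there is a $\xi$ between $K$ and $\bar{K}$ with
\begin{equation*}
  f(K)=f(\bar{K})+f'(\bar{K})(K-\bar{K})+\tfrac12 f''(\xi)(K-\bar{K})^2 .
\end{equation*}
Replacing $f''(\xi)$ by $\sup_{x>0}f''(x)$ turns this into a pointwise upper bound that is quadratic in $K-\bar{K}$. Taking expectations then kills the linear term because $\E[K-\bar{K}]=0$, leaving $f(\bar{K})=2^{-B/\bar{K}}$ plus $\tfrac12\bigl(\sup_x f''(x)\bigr)\Var[K]$. So the entire problem collapses to evaluating $\sup_{x>0}f''(x)$ and checking that it produces the stated coefficient.

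To evaluate the supremum I would substitute $u=B\ln(2)/x$, under which
\begin{equation*}
  f''(x)=\frac{1}{(B\ln 2)^2}\,e^{-u}u^3(u-2),
\end{equation*}
reducing the task to maximizing $h(u)=e^{-u}u^3(u-2)$ over $u>0$. Since $h\to 0$ at both ends, the maximum is interior; solving $h'(u)=0$ gives $u^2-6u+6=0$, whose relevant root is $u=3+\sqrt{3}$, where $h$ attains its global maximum $e^{-(3+\sqrt3)}(144+84\sqrt3)\approx 2.55$. Therefore $\tfrac12\sup_x f''(x)=e^{-(3+\sqrt3)}(72+42\sqrt3)/\bigl((\ln 2)^2 B^2\bigr)\approx 2.654/B^2<2.66/B^2$, which yields the asserted inequality (strict for non-degenerate $K$, since $2.66$ strictly exceeds the true coefficient, and $2.66$ is simply a safe rounding).

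The step I expect to be the main obstacle is exactly this maximization: because $f$ fails to be convex, the whole bound hinges on pinning down the global supremum of $f''$ in closed form via the substitution $u=B\ln(2)/x$ and the quadratic $u^2-6u+6=0$, and then verifying that the resulting constant $e^{-(3+\sqrt3)}(72+42\sqrt3)/(\ln 2)^2$ does round up to $2.66$. By contrast, the vanishing of the linear term and the emergence of $\Var[K]$ are routine once the remainder has been uniformly bounded.
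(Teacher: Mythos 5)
Your proof is correct and takes essentially the same route as the paper's: a Taylor expansion with Lagrange remainder about $\bar{K}$, a uniform bound on the second-derivative coefficient over its argument, and taking expectations so that the linear term vanishes; your maximizer $u=3+\sqrt{3}$ under $u=B\ln(2)/x$ is exactly the paper's stated maximum at $Z=\frac{\ln(2)(3-\sqrt{3})}{6}B$. Your closed-form constant $e^{-(3+\sqrt{3})}(72+42\sqrt{3})/\ln^2(2)\approx 2.654$ coincides with the paper's $\zeta/2<2.66$, so the two arguments differ only in the (convenient) change of variables used to carry out the maximization.
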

\begin{proof}
By rearranging the expression in \cref{eq:asymptotic_bound} we obtain $\epsilon_{\mathrm{fp}}(K,B)\approx 2^{-B/K}$. To bound $\bar{\epsilon}_{\mathrm{fp}}$, we consider the first-order Taylor expansion of $2^{-B/K}$ around $\bar{K}=\E[K]$ given as
\begin{align}
  \begin{split}
   2^{-B/K}&=2^{-B/\bar{K}} + \frac{2^{-B/\bar{K}}}{\bar{K}^2}(K-\bar{K})\\
   &\quad+\frac{2^{-B/Z}B\ln(2)\left(B\ln(2)-2Z\right)}{Z^4}\frac{(K-\bar{K})^2}{2},
 \end{split}\label{eq:taylorapprox}
\end{align}
for some $Z$ between $\bar{K}$ and $K$. By analyzing its derivatives it can be shown that the term $\frac{2^{-B/Z}B\ln(2)\left(B\ln(2)-2Z\right)}{Z^4}$ is bounded and attains its maximum at $Z=\frac{\ln(2)(3-\sqrt{3})}{6}B$. From this we obtain the bound
\begin{align}
  \begin{split}
    \frac{2^{-B/Z}\left(B\ln(2)-2Z\right)}{Z^4} &\le \frac{e^{-\frac{6}{3-\sqrt{3}}} \left(\frac{6}{3-\sqrt{3}}-2\right)}{\left(\frac{3-\sqrt{3}}{6}\right)^3\ln^2(2)B^2}
  \end{split}\\
  &= \frac{\zeta}{B^2},
\end{align}
where $\zeta=e^{-\frac{6}{3-\sqrt{3}}}\left(\frac{6}{3-\sqrt{3}}-2\right)\left(\frac{6}{3-\sqrt{3}}\right)^{3}\ln^{-2}(2)$. By inserting into \cref{eq:taylorapprox}, taking expectation and rearranging we obtain
\begin{align}
  \bar{\epsilon}_{\mathrm{fp}}&\le 2^{-B/\bar{K}} + \frac{\zeta\Var[K]}{2B^2}.
\end{align}
The proof is completed by noting that $\zeta/2<2.66$.
\end{proof}

The result in \cref{prop:expected_bound} can be used to select the feedback message length according to the rule in \cref{eq:B_selection_expected}. We now derive a similar general bound on the probability that $\epsilon_{\mathrm{fp}}$ exceeds $\tilde{\epsilon}_{\mathrm{fp}}$ that can be used for the alternative feedback message length selection rule in \cref{eq:B_selection_exceed}.
\begin{proposition}\label{prop:exceed_bound}
  Let the number of decoded users $K$ be random with mean $\bar{K}=\E[K]$ and variance $\Var[K]$. For a given message length $B$ and $\tilde{\epsilon}_{\mathrm{fp}}\ge 2^{-B/\bar{K}}$, the probability that the false positive probability $\epsilon_{\mathrm{fp}}$ exceeds $\tilde{\epsilon}_{\mathrm{fp}}$ can be bounded as
\begin{align}
  \Pr\left(\epsilon_{\mathrm{fp}} > \tilde{\epsilon}_{\mathrm{fp}}\right) \le \frac{\Var[K]}{\left(\frac{B}{\log_2(1/\tilde{\epsilon}_{\mathrm{fp}})}-\bar{K}\right)^2}\label{eq:exceed_bound}
\end{align}
\end{proposition}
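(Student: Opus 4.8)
The plan is to reduce the exceedance event $\{\epsilon_{\mathrm{fp}} > \tilde{\epsilon}_{\mathrm{fp}}\}$ to a one-sided deviation of the random variable $K$ from its mean, and then apply Chebyshev's inequality. The starting point is the same asymptotic expression $\epsilon_{\mathrm{fp}}(K,B) \approx 2^{-B/K}$ used in the proof of \cref{prop:expected_bound}. The crucial structural observation is that $2^{-B/K}$ is monotonically increasing in $K$ for fixed $B > 0$, so a large false positive probability corresponds to a large number of decoded users.

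Concretely, I would first rewrite the exceedance event. Since $2^{-B/K}$ is increasing in $K$, the event $\{\epsilon_{\mathrm{fp}} > \tilde{\epsilon}_{\mathrm{fp}}\}$ coincides with $\{2^{-B/K} > \tilde{\epsilon}_{\mathrm{fp}}\}$; taking base-$2$ logarithms and rearranging (using $B,K > 0$ and $\log_2(1/\tilde{\epsilon}_{\mathrm{fp}}) > 0$) turns this into the equivalent tail event $\{K > B/\log_2(1/\tilde{\epsilon}_{\mathrm{fp}})\}$. Second, I would invoke the hypothesis $\tilde{\epsilon}_{\mathrm{fp}} \ge 2^{-B/\bar{K}}$, which is precisely the condition guaranteeing $B/\log_2(1/\tilde{\epsilon}_{\mathrm{fp}}) \ge \bar{K}$. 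This ensures that the threshold lies at or above the mean, so that $t := B/\log_2(1/\tilde{\epsilon}_{\mathrm{fp}}) - \bar{K} \ge 0$ and the event is a genuine upper-tail deviation $\{K - \bar{K} > t\}$.

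Finally, I would bound this probability by Chebyshev's inequality, $\Pr(K - \bar{K} > t) \le \Pr(|K - \bar{K}| \ge t) \le \Var[K]/t^2$, which upon substituting $t = B/\log_2(1/\tilde{\epsilon}_{\mathrm{fp}}) - \bar{K}$ yields exactly \eqref{eq:exceed_bound}.

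I do not expect a serious obstacle here; the only subtlety is bookkeeping the direction of the inequality when taking logarithms and confirming that the hypothesis on $\tilde{\epsilon}_{\mathrm{fp}}$ places the threshold above the mean so that the deviation $t$ is nonnegative, since otherwise Chebyshev would not apply in this form. A minor refinement worth noting is that the one-sided Cantelli inequality $\Pr(K - \bar{K} \ge t) \le \Var[K]/(\Var[K] + t^2)$ would give a slightly tighter constant, but the two-sided Chebyshev bound suffices for the stated result and keeps the expression clean.
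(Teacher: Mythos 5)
Your proposal is correct and follows essentially the same route as the paper's proof: both reduce the event $\{\epsilon_{\mathrm{fp}} > \tilde{\epsilon}_{\mathrm{fp}}\}$ via the monotonicity of $2^{-B/K}$ in $K$ to the tail event $\{K > B/\log_2(1/\tilde{\epsilon}_{\mathrm{fp}})\}$, use the hypothesis $\tilde{\epsilon}_{\mathrm{fp}}\ge 2^{-B/\bar{K}}$ to ensure the deviation threshold is nonnegative, and then apply the two-sided Chebyshev inequality. Your remark that Cantelli's one-sided inequality would give a slightly tighter bound is accurate, but the paper, like you, settles for the plain Chebyshev form.
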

\begin{proof}
  Note first that
  \begin{align}
    \Pr\left(\epsilon_{\mathrm{fp}} > \tilde{\epsilon}_{\mathrm{fp}}\right) &= \Pr\left(K > \frac{B}{\log_2(1/\tilde{\epsilon}_{\mathrm{fp}})}\right)\\
     &\le \Pr\left(K \ge \frac{B}{\log_2(1/\tilde{\epsilon}_{\mathrm{fp}})}\right).
  \end{align}
  Applying Chebyshev's inequality yields
  \begin{align}
    \Pr\left(K \ge \frac{B}{\log_2(1/\tilde{\epsilon}_{\mathrm{fp}})}\right)&\le \Pr\left(|K-\bar{K}| \ge  \frac{B}{\log_2(1/\tilde{\epsilon}_{\mathrm{fp}})}-\bar{K}\right)\\
    &\le \frac{\Var[K]}{\left(\frac{B}{\log_2(1/\tilde{\epsilon}_{\mathrm{fp}})}-\bar{K}\right)^2}.
  \end{align}
\end{proof}

Because the bound in \cref{prop:exceed_bound} does not assume much about the distribution of $K$, it is in general not very tight. If we further assume that the users activate and are decoded independently (but not necessarily with the same probability), we can tighten the bound as presented in the following proposition. Because the number of decoded users are assumed to be independent, the bound depends only on the first moment of $K$.
\begin{proposition}\label{prop:exceed_bound_indep}
  Let the number of decoded users $K=\sum_{i=1}^N k_i$ where $k_i\in\{0,1\}$ are independently Bernoulli random variables with $\Pr(k_i=1)=p_i$, and let $\bar{K}=\E[K]=\sum_{i=1}^N p_i$. For a given feedback message length $B$ and $\tilde{\epsilon}_{\mathrm{fp}}\ge 2^{-B/\bar{K}}$, the probability that the false positive probability $\epsilon_{\mathrm{fp}}$ exceeds $\tilde{\epsilon}_{\mathrm{fp}}$ can be bounded as
  \begin{align}
    \Pr\left(\epsilon_{\mathrm{fp}} > \tilde{\epsilon}_{\mathrm{fp}}\right) &< \Pr\left(\frac{e^{\left(\eta_{\tilde{\epsilon}_{\mathrm{fp}}}-1\right)}}{\left(\eta_{\tilde{\epsilon}_{\mathrm{fp}}}\right)^{\eta_{\tilde{\epsilon}_{\mathrm{fp}}}}}\right)^{\bar{K}},\label{eq:exceed_bound_indep}
  \end{align}
  where $\eta_{\tilde{\epsilon}_{\mathrm{fp}}} = B/\left(\bar{K}\log_2(1/\tilde{\epsilon}_{\mathrm{fp}})\right)$.
\end{proposition}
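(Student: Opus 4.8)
The plan is to recognize the claimed bound as exactly the multiplicative Chernoff bound for the upper tail of a sum of independent Bernoulli random variables. First I would reuse the opening step of the proof of \cref{prop:exceed_bound}, which shows that $\{\epsilon_{\mathrm{fp}} > \tilde{\epsilon}_{\mathrm{fp}}\}$ is the event $\{K > B/\log_2(1/\tilde{\epsilon}_{\mathrm{fp}})\}$. Writing the threshold as $\eta_{\tilde{\epsilon}_{\mathrm{fp}}}\bar{K}$ with $\eta_{\tilde{\epsilon}_{\mathrm{fp}}} = B/(\bar{K}\log_2(1/\tilde{\epsilon}_{\mathrm{fp}}))$, the hypothesis $\tilde{\epsilon}_{\mathrm{fp}} \ge 2^{-B/\bar{K}}$ is equivalent, after taking logarithms, to $\eta_{\tilde{\epsilon}_{\mathrm{fp}}} \ge 1$. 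Hence the deviation lies in the upper tail and a Chernoff argument with a nonnegative exponential parameter is admissible.

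Next I would bound $\Pr(K \ge \eta_{\tilde{\epsilon}_{\mathrm{fp}}}\bar{K})$ through the moment generating function. For any $s \ge 0$, Markov's inequality gives $\Pr(K \ge a) \le e^{-sa}\,\E[e^{sK}]$, and independence of the $k_i$ factorizes the expectation into $\prod_{i=1}^N(1 - p_i + p_i e^s)$. Applying $1 + x \le e^x$ to each factor yields $\E[e^{sK}] \le \exp(\bar{K}(e^s - 1))$, so that $\Pr(K \ge a) \le \exp(-sa + \bar{K}(e^s - 1))$. I would then optimize the exponent over $s$: its derivative vanishes at $e^s = a/\bar{K}$, and with $a = \eta_{\tilde{\epsilon}_{\mathrm{fp}}}\bar{K}$ this gives the admissible choice $s = \ln\eta_{\tilde{\epsilon}_{\mathrm{fp}}} \ge 0$. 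Substituting collapses the exponent to $\bar{K}(\eta_{\tilde{\epsilon}_{\mathrm{fp}}} - 1 - \eta_{\tilde{\epsilon}_{\mathrm{fp}}}\ln\eta_{\tilde{\epsilon}_{\mathrm{fp}}})$, which is precisely $\ln\big((e^{\eta_{\tilde{\epsilon}_{\mathrm{fp}}}-1}/\eta_{\tilde{\epsilon}_{\mathrm{fp}}}^{\eta_{\tilde{\epsilon}_{\mathrm{fp}}}})^{\bar{K}}\big)$, reproducing the claimed right-hand side. Note that the resulting bound depends on the distribution of $K$ only through its first moment $\bar{K}$, as asserted.

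The strict inequality would follow from the strictness of $1 + x < e^x$ for $x \ne 0$: as long as $\eta_{\tilde{\epsilon}_{\mathrm{fp}}} > 1$ (so $s > 0$) and at least one $p_i \in (0,1)$, the factor-by-factor bounding step is strict. I do not anticipate a substantial obstacle here, since the argument is a textbook Chernoff bound. The only points requiring care are the bookkeeping that verifies the hypothesis $\tilde{\epsilon}_{\mathrm{fp}} \ge 2^{-B/\bar{K}}$ is exactly what makes the optimizing parameter $s = \ln\eta_{\tilde{\epsilon}_{\mathrm{fp}}}$ nonnegative and hence the Chernoff step valid, and tracking where the final strict inequality originates.
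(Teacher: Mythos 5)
Your proposal is correct and follows essentially the same route as the paper: the paper likewise reduces the event $\{\epsilon_{\mathrm{fp}} > \tilde{\epsilon}_{\mathrm{fp}}\}$ to the tail event $\{K \ge B/\log_2(1/\tilde{\epsilon}_{\mathrm{fp}})\} = \{K \ge \eta_{\tilde{\epsilon}_{\mathrm{fp}}}\bar{K}\}$ and then applies the multiplicative Chernoff bound for Poisson trials (cited there as Theorem 4.4 of~\cite{mitzenmacher2017probability}), which you derive from first principles via the moment-generating-function argument. The only difference is that you inline the proof of the cited theorem and additionally track where the strictness of the inequality and the condition $\tilde{\epsilon}_{\mathrm{fp}} \ge 2^{-B/\bar{K}}$ (i.e., $\eta_{\tilde{\epsilon}_{\mathrm{fp}}}\ge 1$) enter, which the paper leaves implicit.
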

\begin{proof}
  As in \cref{prop:exceed_bound} we have
  \begin{align}
    \Pr\left(\epsilon_{\mathrm{fp}} > \tilde{\epsilon}_{\mathrm{fp}}\right) &\le \Pr\left(K \ge \frac{B}{\log_2(1/\tilde{\epsilon}_{\mathrm{fp}})}\right).
  \end{align}
  Defining $\eta_{\tilde{\epsilon}_{\mathrm{fp}}} = B/\left(\bar{K}\log_2(1/\tilde{\epsilon}_{\mathrm{fp}})\right)$ and applying the Chernoff bound for Poisson trials (see e.g., Theorem 4.4 in~\cite{mitzenmacher2017probability}), we obtain
  \begin{align}
    \Pr\left(K \ge \frac{B}{\log_2(1/\tilde{\epsilon}_{\mathrm{fp}})}\right)&= \Pr\left(K \ge \eta_{\tilde{\epsilon}_{\mathrm{fp}}}\bar{K}\right)\\
    &< \Pr\left(\frac{e^{\left(\eta_{\tilde{\epsilon}_{\mathrm{fp}}}-1\right)}}{\left(\eta_{\tilde{\epsilon}_{\mathrm{fp}}}\right)^{\eta_{\tilde{\epsilon}_{\mathrm{fp}}}}}\right)^{\bar{K}},
  \end{align}
  which completes the proof.
\end{proof}

\section{Random Access with Feedback and Retransmissions}\label{sec:feedback_retx}
In this section, we analyze the impact of feedback in a scenario with $L$ transmission rounds, each comprising an uplink and a downlink phase. We first analyze the problem with packet erasure channels in both uplink and downlink, and then extend the analysis to a richer channel model that allows us to characterize the trade-off between false positives/negatives in the feedback message and the transmission rate. We assume that the transmission rounds are independent, and that the channel erasure probabilities are the same in all rounds.

\subsection{Packet Erasure Channels}
\begin{figure}
  \centering
\includegraphics{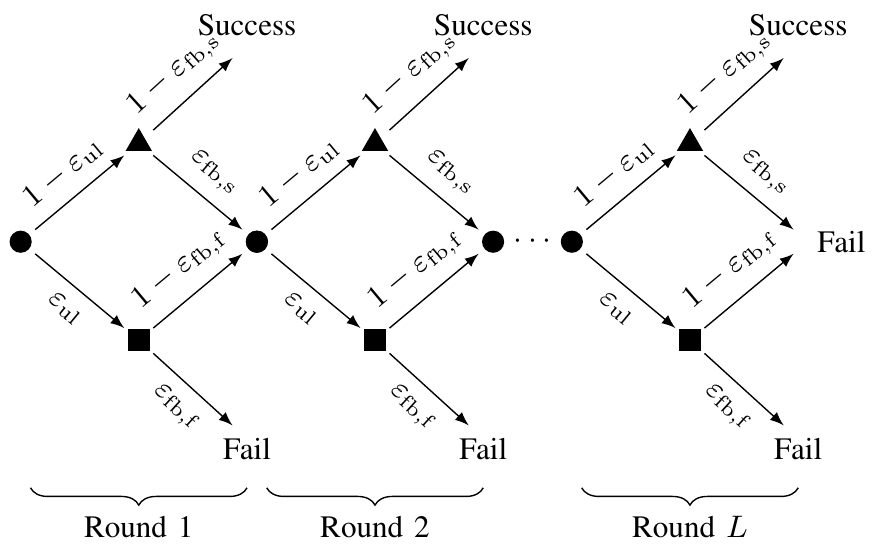}
  \caption{Events in the case with $L$ transmission rounds, where $\epsilon_{\mathrm{fb,s}}=1-(1-\epsilon_{\mathrm{dl}})(1-\epsilon_{\mathrm{fn}})$ and $\epsilon_{\mathrm{fb,f}}=(1-\epsilon_{\mathrm{dl}})\epsilon_{\mathrm{fp}}$. The circles represent the start of a round, triangles represent the case when a packet is successfully decoded by the BS, and the squares are when the packets are not decoded by the BS. A success occurs when a packet is both decoded by the BS and the user decodes the acknowledgment.}
  \label{fig:retransmission_model}
\end{figure}
We consider the transmission scenario from the perspective of a single user and assume an uplink erasure probability $\epsilon_{\mathrm{ul}}$, downlink erasure probability $\epsilon_{\mathrm{dl}}$, false positive probability $\epsilon_{\mathrm{fp}}$, and false negative probability $\epsilon_{\mathrm{fn}}$, which are the same in all transmission rounds. Under these conditions, the transmission process is illustrated in \cref{fig:retransmission_model}, where $\epsilon_{\mathrm{fb,s}}=1-(1-\epsilon_{\mathrm{dl}})(1-\epsilon_{\mathrm{fn}})$ and $\epsilon_{\mathrm{fb,f}}=(1-\epsilon_{\mathrm{dl}})\epsilon_{\mathrm{fp}}$ are the probabilities that the user makes a wrong decision based on the feedback, conditioned on success or failure in the uplink, respectively. We assume that the user succeeds only if it received an acknowledgment for the packet transmitted in the same round, i.e., a false positive acknowledgment is a failure even if the uplink was successful in a previous round but the downlink in that round was unsuccessful. Furthermore, the user retransmits if it is unable to decode the feedback. The failure probability is then given as
\begin{align}
  \Pr(\text{fail}) &= 1-\sum_{l=1}^{L} \ell^{l-1}(1-\epsilon_{\mathrm{ul}})(1-\epsilon_{\mathrm{fb,s}})\\
  &= 1-(1-\epsilon_{\mathrm{ul}})(1-\epsilon_{\mathrm{fb,s}})\left(\frac{1-\ell^L}{1-\ell}\right),\label{eq:pr_fail}
\end{align}
where $\ell=\epsilon_{\mathrm{ul}}(1-\epsilon_{\mathrm{fb,f}}) + (1-\epsilon_{\mathrm{ul}})\epsilon_{\mathrm{fb,s}}$ is the probability that the user proceeds from one transmission round to the next. To gain some insight into the behavior, suppose first that $L=1$, in which case the expression reduces to
\begin{align}
  \Pr(\text{fail}) &=1-(1-\epsilon_{\mathrm{ul}})(1-\epsilon_{\mathrm{dl}})(1-\epsilon_{\mathrm{fn}}),\label{eq:pr_fail_onetx}
\end{align}
suggesting that $\epsilon_{\mathrm{ul}}$, $\epsilon_{\mathrm{dl}}$ and $\epsilon_{\mathrm{fn}}$ have equal importance in minimizing the failure probability. Furthermore, because false positives can only occur when the uplink fails, in which case the entire transmission fails since $L=1$, the failure probability is independent of the false positive probability $\epsilon_{\mathrm{fp}}$. Similarly, suppose now that we allow an infinite number of retransmissions. By taking the limit $L\to \infty$ in \cref{eq:pr_fail} we obtain
\begin{align}
  \Pr(\text{fail}) &=1- \frac{(1-\epsilon_{\mathrm{ul}})(1-\epsilon_{\mathrm{fn}})}{1-\epsilon_{\mathrm{fn}}-\epsilon_{\mathrm{ul}}(1-\epsilon_{\mathrm{fn}}-\epsilon_{\mathrm{fp}})}.\label{eq:pr_fail_limit}
\end{align}
Note that this expression depends on the uplink erasure probability and the false positive/negative probabilities, but not on the downlink erasure probability $\epsilon_{\mathrm{dl}}$, which is because an erasure in the downlink always will result in a retransmission. When $\epsilon_{\mathrm{fp}}=0$ the expression reduces to $\Pr(\text{fail})=0$ indicating that even with false negatives (but no false positives), an arbitrarily high reliability can be achieved by increasing the transmission rounds. On the other hand, when $\epsilon_{\mathrm{fn}}= 0$ the expression in \cref{eq:pr_fail_limit} reduces to
\begin{align}
  \Pr(\text{fail}) &=1- \frac{1-\epsilon_{\mathrm{ul}}}{1-\epsilon_{\mathrm{ul}}(1-\epsilon_{\mathrm{fp}})},\label{eq:pr_fail_limit_nofn}
\end{align}
suggesting that when $\epsilon_{\mathrm{fp}}>0$ retransmissions cannot fully compensate for an unreliable uplink channel. The intuition behind this is that an unreliable uplink increases the probability of receiving a false positive, which in turn increases the failure probability.

Returning to the case with finite $L$, by rearranging \cref{eq:pr_fail} we can also obtain an expression for the number of transmission rounds $L$ required to achieve a target failure probability $\epsilon_{\mathrm{fail}}$ as
\begin{align}
  L &\ge \left\lceil\frac{\ln\left(1-\frac{1-\epsilon_{\mathrm{fail}}}{(1-\epsilon_{\mathrm{ul}})(1-\epsilon_{\mathrm{fb,s}})}(1-\ell)\right)}{\ln(\ell)}\right\rceil,
\end{align}
which is valid when $\epsilon_{\mathrm{fail}}$ is greater than the asymptotic failure probability given by \cref{eq:pr_fail_limit}. In particular, to be within a factor $(1+\rho)$ of the asymptotic error probability, one needs
\begin{align}
  L &\ge \left\lceil\frac{\ln\left(\frac{\rho\epsilon_{\mathrm{ul}}\epsilon_{\mathrm{fp}}}{(1-\epsilon_{\mathrm{fn}})(1-\epsilon_{\mathrm{ul}})}\right)}{\ln(\ell)}\right\rceil.
\end{align}

Note that the analysis above holds even when the number of users $K$ is random if the length of the feedback message and the transmission rate (channel coding rate) of the feedback message are adapted based on the instantaneous $K$ to match $\epsilon_{\mathrm{dl}}$, $\epsilon_{\mathrm{fp}}$, and $\epsilon_{\mathrm{fn}}$. However, if the length of the feedback message remains fixed, $\epsilon_{\mathrm{fp}}$ and $\epsilon_{\mathrm{fn}}$ depends on the instantaneous $K$. In this case, a reasonable strategy is to use \cref{eq:pr_fail} to determine an appropriate $\epsilon_{\mathrm{fp}}$, and then apply either \cref{prop:expected_bound} or \cref{prop:exceed_bound} to select the feedback message length such that the target false positive probability is satisfied with the desired probability.

\subsection{Source/Channel Coding Trade-off}\label{sec:src_ch_tradeoff}
In practice, the erasure probability of the downlink transmission, $\epsilon_{\mathrm{dl}}$, is a function of the transmission rate and depends on the SNR at the receiver. Furthermore, for a given number of symbols transmitted over the channel, the rate depends on the length of the feedback message $B$, which directly impacts the false positive/negative probabilities. Consequently, there is an inherent trade-off between $\epsilon_{\mathrm{dl}}$, $\epsilon_{\mathrm{fp}}$ and $\epsilon_{\mathrm{fn}}$, which determine the overall reliability of the system. To illustrate the trade-off, suppose we can construct a feedback message with $\epsilon_{\mathrm{fn}}=0$ and false positive probability $\epsilon_{\mathrm{fp}}$ using $K\log_2(1/\epsilon_{\mathrm{fp}})$ bits, and that we aim to transmit it over a quasi-static fading channel with additive noise and instantaneous SNR given by $\gamma$. For a given number of symbols $c$, the transmission rate is given as $K\log_2(1/\epsilon_{\mathrm{fp}})/c$ and the probability of decoding error is thus
\begin{align}
  \epsilon_{\mathrm{dl}} &= \Pr\left(\log_2(1+\gamma) < \frac{K\log_2(1/\epsilon_{\mathrm{fp}})}{c}\right)\\
  &=\Pr\left(\gamma < \epsilon_{\mathrm{fp}}^{-K/c}-1\right),\label{eq:outage_dl}
\end{align}
illustrating, as expected, that decreasing $\epsilon_{\mathrm{fp}}$ causes $\epsilon_{\mathrm{dl}}$ to increase since a higher transmission rate is required.

If the number of symbols for the feedback message and the number of retransmissions $L$ are fixed, the transmission rate and the false positive/negative probabilities need to be jointly optimized to minimize the failure probability in \cref{eq:pr_fail}. This is in general a non-convex optimization problem that requires numerical evaluation of \cref{eq:pr_fail} over a range of $\epsilon_{\mathrm{fp}}$. The asymptotic expression in \cref{eq:pr_fail_limit} suggests that when $L$ is large, we should aim at minimizing $\epsilon_{\mathrm{fp}}$ and $\epsilon_{\mathrm{fn}}$ since $\epsilon_{\mathrm{dl}}$ has no impact on the failure probability. In particular, in this case the introduction of false positives will lead to a worse performance compared to identifier concatenation, as there is no gain in reducing the length of the acknowledgment packet. However, when $L$ is small, the downlink erasure probability has an increasing impact since a successful downlink transmission is required to succeed. In particular, to minimize the failure probability for $L=1$ the downlink probability and the false negative probability should be equal, while $\epsilon_{\mathrm{fp}}$ has no impact, as can be seen in \cref{eq:pr_fail_onetx}.

\section{Numerical Results}\label{sec:num_res}
In this section, we evaluate the feedback schemes in a typical massive random access setting. We first present results that illustrate the impact of false positives in the setting with multiple transmission rounds and fixed $K$ over a simple erasure channel. We then investigate the case with random $K$, and finally we exemplify the trade-off between allocating channel symbols for the uplink and the feedback under a random access channel in the uplink and a Rayleigh fading channel in the downlink. Except for the cases where it is explicitly mentioned, we will assume that $\epsilon_{\mathrm{fn}}=0$.

\begin{figure}
  \centering
\includegraphics{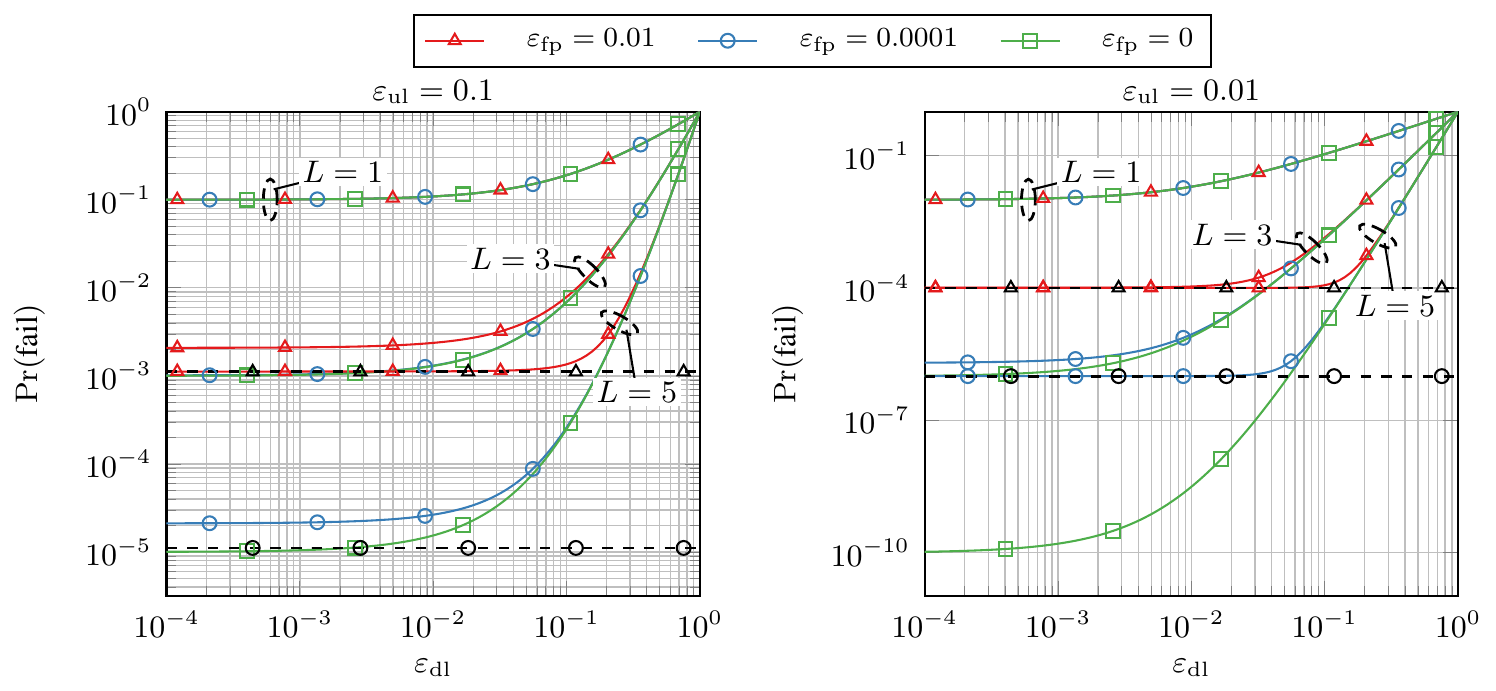}
  \caption{The probability that a transmission fails with $L$ rounds obtained using \cref{eq:pr_fail}. The dashed lines show the asymptotic results for $L\to\infty$.}
  \label{fig:retransmissions_pec}
\end{figure}

\subsection{Fixed $K$ and $L$ Transmission Rounds}
When $K$ is fixed, the false positive probability $\epsilon_{\mathrm{fp}}$ is constant and can be picked arbitrarily by choosing an appropriate feedback message length $B$. The probability that a transmission fails in a setting with $L$ retransmissions is shown in \cref{fig:retransmissions_pec} along with the asymptotic results for $L\to\infty$, obtained using \cref{eq:pr_fail,eq:pr_fail_limit_nofn}, respectively. Although the downlink erasure probability $\epsilon_{\mathrm{dl}}$ has no impact as $L\to\infty$, it has a significant impact when $L$ is small. In particular, for finite $L$ the failure probability is at least $(\epsilon_{\mathrm{dl}})^L$, as a successful downlink transmission is required in order for the user to succeed. Similarly, we can observe an error floor as $\epsilon_{\mathrm{dl}}$ approaches zero caused by both the false positive probability and the uplink erasure probability. When $\epsilon_{\mathrm{fp}}$ is small the floor is approximately at $(\epsilon_{\mathrm{ul}})^L$. On the other hand, when $\epsilon_{\mathrm{ul}}$ is small, the error floor is dominated by $\epsilon_{\mathrm{fp}}$.

For low $\epsilon_{\mathrm{dl}}$, the failure probabilities are rather close to the asymptotic failure probabilities despite $L$ being as low as $3$ or $5$ (where the solid and dashed lines coincide). In this regime, the failure probability is limited only by $\epsilon_{\mathrm{fp}}$ and $\epsilon_{\mathrm{ul}}$, and increasing the downlink reliability or the number of transmission rounds will not lead to a reduced failure probability.

\subsection{Random $K$ and $L$ Transmission Rounds}
We now study the case when $K$ is random, and start by assessing the accuracy of the bounds derived in \cref{prop:expected_bound,prop:exceed_bound,prop:exceed_bound_indep} and investigating how the false positive probability depends on the distribution of $K$ when the length of the feedback message remains fixed to $B$ bits.
The impact of the distribution of $K$ is illustrated in \cref{fig:poisson_activations}, where $K$ follows a Poisson distribution with mean $\lambda$. \cref{fig:poisson_activations_expected} shows the expected false positive probability $\bar{\epsilon}_{\mathrm{fp}}$, computed numerically, and the bound from \cref{prop:expected_bound} when the feedback message length $B$ is optimized to provide a false positive probability $\tilde{\epsilon}_{\mathrm{fp}}=0.0001$ when $K=K'$. As can be seen, the expected false positive probability is larger than the target false positive probability of $\tilde{\epsilon}_{\mathrm{fp}}=0.0001$ when $\lambda=K'$, suggesting that optimizing based on only the expected number of active users is insufficient. However, the bound, which also takes into account the variance of $K$, is accurate when $\lambda$ is close to and greater than $K'$, and can be used to pick a feedback message length that satisfies the target false positive probability when $\lambda=K'$ at the cost of only a minor message length penalty.

The probability that $\epsilon_{\mathrm{fp}}$ exceeds $\tilde{\epsilon}_{\mathrm{fp}}=0.0001$ is shown in \cref{fig:poisson_activations_exceed} along with the bounds from \cref{prop:exceed_bound,prop:exceed_bound_indep}. While the bound from \cref{prop:exceed_bound} is reasonable when $\lambda$ is close to $K'$, it is generally quite weak due to the strong concentration of the Poisson distribution around its mean. However, by assuming that the users activate independently as in \cref{prop:exceed_bound_indep} the bound can be significantly tightened especially for low $\lambda$.

\begin{figure}
  \centering
  \ifdefined\SINGLECOL\begin{minipage}{0.5\textwidth}\fi
  \centering
  \begin{subfigure}[c]{0.95\linewidth}
    \centering
\includegraphics{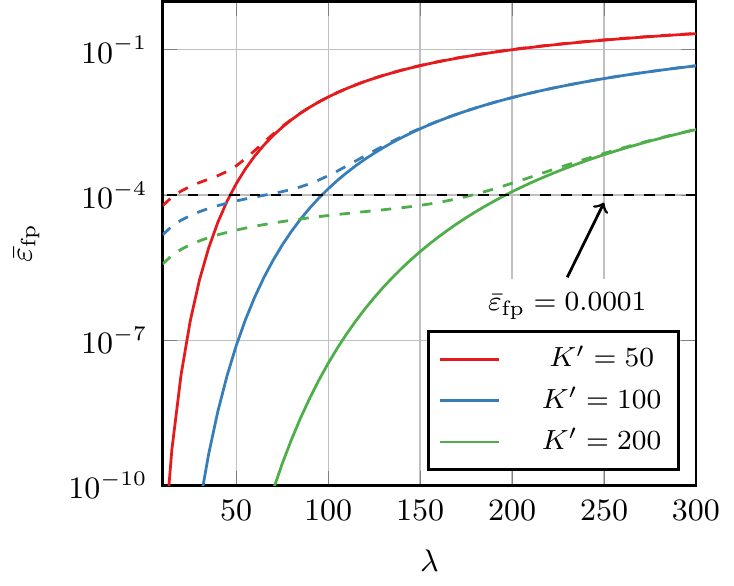}
    \caption{}
    \label{fig:poisson_activations_expected}
  \end{subfigure}
  \ifdefined\SINGLECOL\end{minipage}\hfill\begin{minipage}{0.5\textwidth}\fi
  \begin{subfigure}[c]{0.95\linewidth}
    \centering
\includegraphics{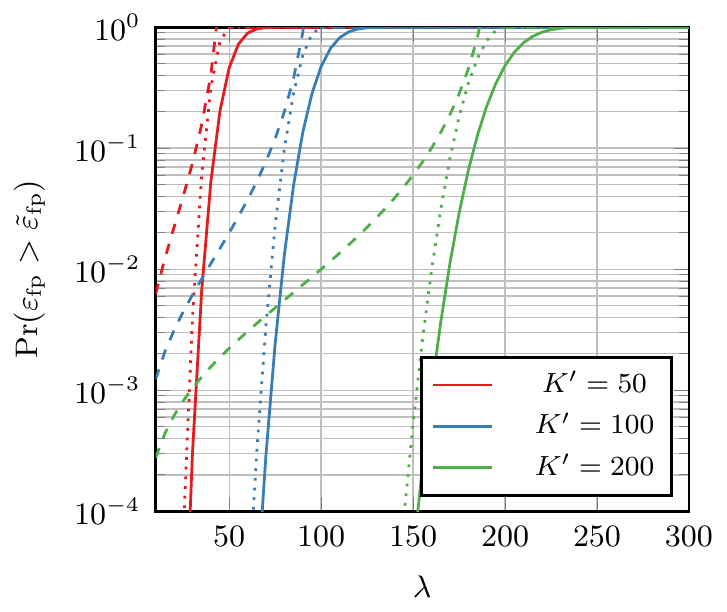}
    \caption{}
    \label{fig:poisson_activations_exceed}
  \end{subfigure}
  \ifdefined\SINGLECOL\end{minipage}\fi
  \caption{Illustrations of the bounds for (\subref{fig:poisson_activations_expected}) the expected false positive probability $\bar{\epsilon}_{\mathrm{fp}}$, and (\subref{fig:poisson_activations_exceed}) the probability that $\epsilon_{\mathrm{fp}}$ exceeds $\tilde{\epsilon}_{\mathrm{fp}}=0.0001$ when $K$ is Poisson distributed with mean $\lambda$. The feedback message length $B$ is optimized to guarantee a false positive probability of $\tilde{\epsilon}_{\mathrm{fp}}=0.0001$ when $K=K'$. The solid lines indicate the actual probabilities obtained using the full distribution, and in (\subref{fig:poisson_activations_expected}) the dashed line shows \cref{prop:expected_bound}, while in (\subref{fig:poisson_activations_exceed}) the dashed and dotted lines show \cref{prop:exceed_bound} and \cref{prop:exceed_bound_indep}, respectively.}
  \label{fig:poisson_activations}
\end{figure}

We now turn our attention to the case with $L$ transmission rounds, and assume that the length of the feedback message $B$ is selected using \cref{prop:expected_bound} to satisfy a given false positive requirement $\tilde{\epsilon}_{\mathrm{fp}}$ on average. The failure probability when $K$ is Poisson distributed is shown in \cref{fig:retransmissions_pec_poisson} for $L=5$ and $\epsilon_{\mathrm{dl}}=0.01$. Because the message length is selected using the bound from \cref{prop:expected_bound}, the failure probability for random $K$ is lower than the one with deterministic $K$, indicated by the dashed lines. The gap between the failure probability for deterministic $K$ and random $K$ decreases as $\lambda$ increases, which is due to the bound becoming more tight in this regime. This confirms that the bound can be used as a useful tool to select the message length.

\begin{figure}
  \centering
\includegraphics{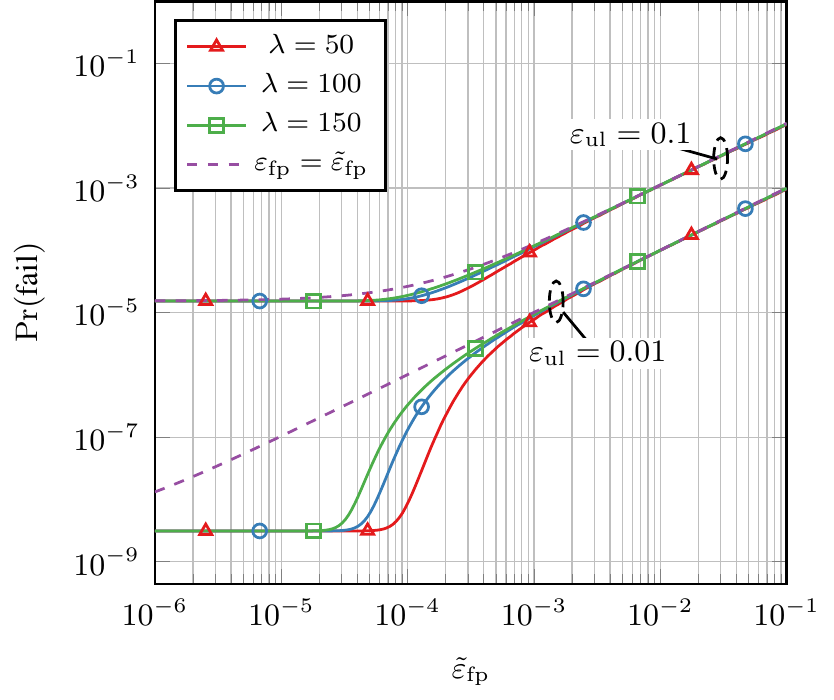}
  \caption{Failure probability vs. target average false positive probability $\tilde{\epsilon}_{\mathrm{fp}}$ for Poisson arrivals with mean $\lambda$, for $L=5$ and $\epsilon_{\mathrm{dl}}=0.01$. The message lengths are selected using the bound in \cref{prop:expected_bound}, and the dashed lines show the failure probability when $K$ is deterministic.}
  \label{fig:retransmissions_pec_poisson}
\end{figure}

\begin{figure}
  \centering
\includegraphics{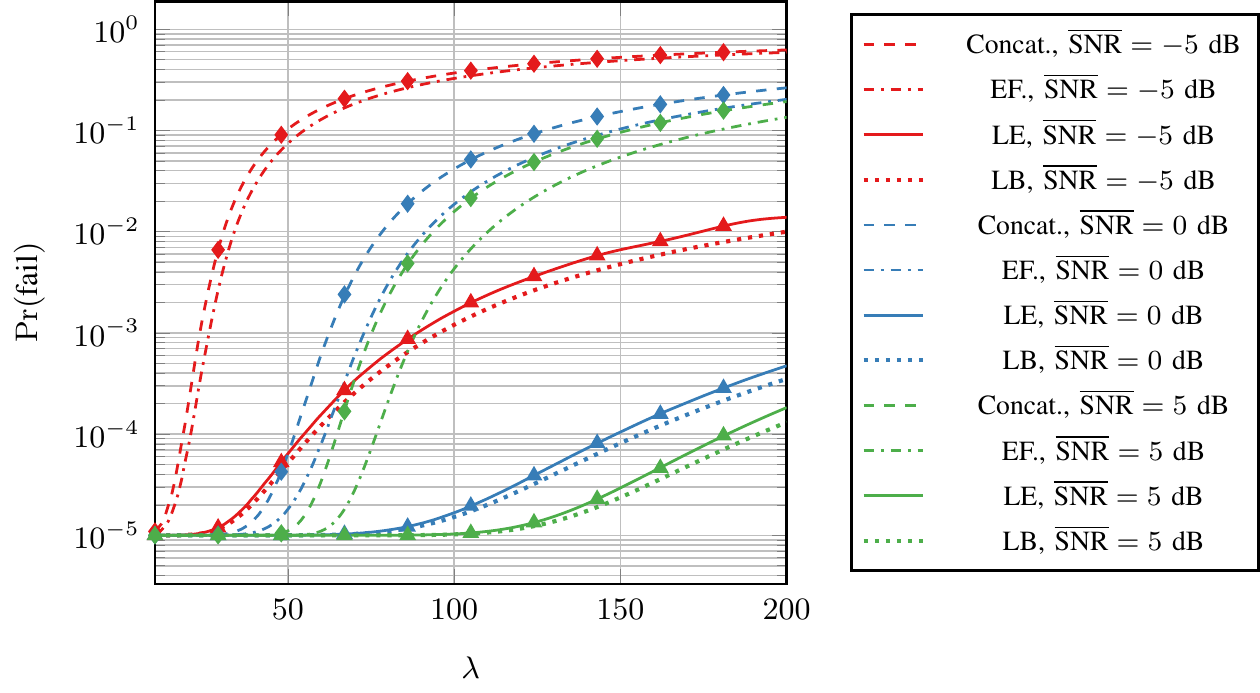}
  \caption{Failure probability for the concatenation based encoding compared to the error-free (EF) bound (\cref{eq:bound_err_free}), the linear equations (LE) scheme (\cref{eq:r_le}), and the lower bound (LB, \cref{eq:asymptotic_bound}) when the acknowledgment is transmitted over a Rayleigh fading channel with $c=2048$ symbols and $64$ transmitter antennas. $K$ is Poisson distributed with mean $\lambda$, $\epsilon_{\mathrm{ul}}=0.1$ and $L=5$. Diamond and triangle markers are obtained by simulation of the concatenation and LE schemes, respectively.}
  \label{fig:joint_src_ch_coding}
\end{figure}

\subsection{Source/Channel Coding Trade-off}
We finish the section by studying the trade-off between the number of bits used to encode the acknowledgments and the transmission rate. We assume that $K$ is Poisson distributed with arrival rate $\lambda$ and the uplink reliability is $\epsilon_{\mathrm{ul}}=0.1$. For the downlink, we pick $\epsilon_{\mathrm{dl}}$ using \cref{eq:outage_dl} for the case in which the BS has $64$ antennas, there are $L=5$ transmission rounds, and $c=2048$ channel symbols are available for the feedback, such that the transmission rate is $B/2048$ bits/symbol. Assuming a quasi-static flat-fading Rayleigh channel with average SNR $\overline{\text{SNR}}$, the instantaneous SNR at the user, $\gamma$, is Gamma distributed with shape and scale parameters equal to $64$ and $\overline{\text{SNR}}/64$, respectively. We consider four encoding methods, namely identifier concatenation, the error-free (EF) method from \cref{eq:bound_err_free}, the scheme based on linear equations (LE) presented in \cref{sec:le}, and the asymptotically optimal scheme from \cref{eq:asymptotic_bound}. For each value of $\lambda$ and each encoding scheme, we optimize $B$ so that $\Pr(\text{fail})$ is minimized when averaged over the instantaneous arrivals given $\lambda$. Thus, the transmission rate remains fixed for a given $\lambda$. In the concatenation and error-free schemes, we assume that each identifier requires $32$ bits, and when the length of the feedback message is less than $32K$ (if the instantaneous $K$ is large compared to the message length), we encode a random subset comprising $\lceil B/32\rceil$ identifiers, which results in a false negative probability of $\epsilon_{\mathrm{fn}}=1-\lceil B/32\rceil/K$ (but no false positives). Therefore, while these representations are error-free when the number of bits $B$ is adapted to $K$, they are \emph{not} error free here where $K$ is random and the number of bits is optimized to minimize the failure probability.

The results are shown in \cref{fig:joint_src_ch_coding} for $\overline{\text{SNR}}\in\{-5,0,5\}~\text{dB}$. The figure shows that, despite introducing false positives, the failure probability can be substantially decreased when the scheme based on linear equations is used compared to both the straightforward concatenation scheme and the bound given by the EF scheme. This is because admitting false positives allows the message length to be significantly reduced (and thus, the transmission rate), which in turn leads to much higher reliability of the downlink feedback. Furthermore, as expected the scheme based on linear equations performs close to the asymptotically optimal bound, with a gap caused only by the rounding in \cref{eq:r_le}. Finally, we see that simulations, indicated by the markers, agree with the theoretical analysis, manifesting that the gains can be attained in practice.

The failure probability vs. average SNR is shown for the same scenario in \cref{fig:joint_src_ch_coding_vs_snr} for $\lambda=100$, illustrating that the method based on linear equations is superior for a wide range of SNRs, especially in the low-SNR regime where the compressed representation allows the message to be transmitted at a lower rate. The figure reveals that in order for the error-free methods to perform better, the SNR must be high and the number of transmission rounds must be large, so that neither the downlink rate nor the uplink success probability are limiting the performance, but only the false positives. In particular, by comparing \cref{fig:joint_src_ch_coding_vs_snr_01} and \cref{fig:joint_src_ch_coding_vs_snr_001} for $L=10$, it is clear that the error-free methods are more beneficial in the latter case where the uplink success probability is higher and thus not limiting the performance as much.
\begin{figure}
  \centering
  \ifdefined\SINGLECOL\begin{minipage}{0.5\textwidth}\fi
  \centering
  \begin{subfigure}[c]{0.95\linewidth}
    \centering
\includegraphics{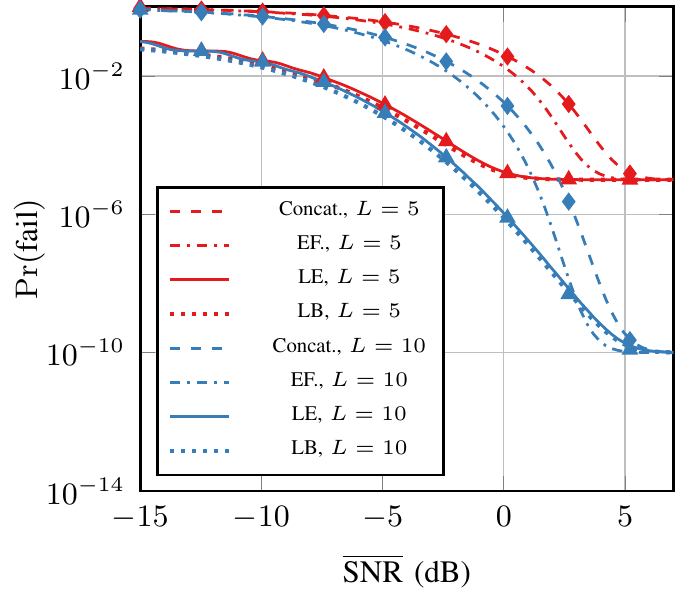}
    \caption{$\epsilon_{\mathrm{ul}}=0.1$.}
    \label{fig:joint_src_ch_coding_vs_snr_01}
  \end{subfigure}
  \ifdefined\SINGLECOL\end{minipage}\hfill\begin{minipage}{0.5\textwidth}\fi
  \begin{subfigure}[c]{0.95\linewidth}
    \centering
\includegraphics{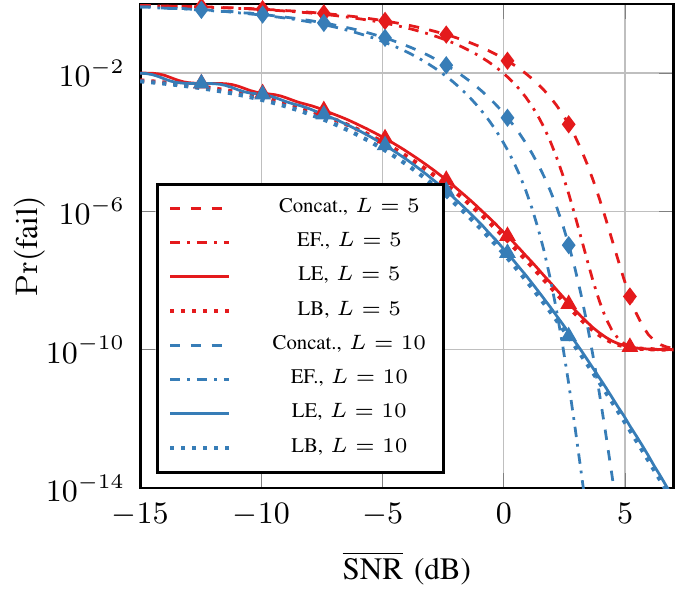}
    \caption{$\epsilon_{\mathrm{ul}}=0.01$.}
    \label{fig:joint_src_ch_coding_vs_snr_001}
  \end{subfigure}
  \ifdefined\SINGLECOL\end{minipage}\fi
  \caption{Failure probability vs. SNRs for $L$ transmission rounds and $\lambda=100$. The remaining parameters are the same as in \cref{fig:joint_src_ch_coding}.}
  \label{fig:joint_src_ch_coding_vs_snr}
\end{figure}

\begin{figure}
  \centering
\includegraphics{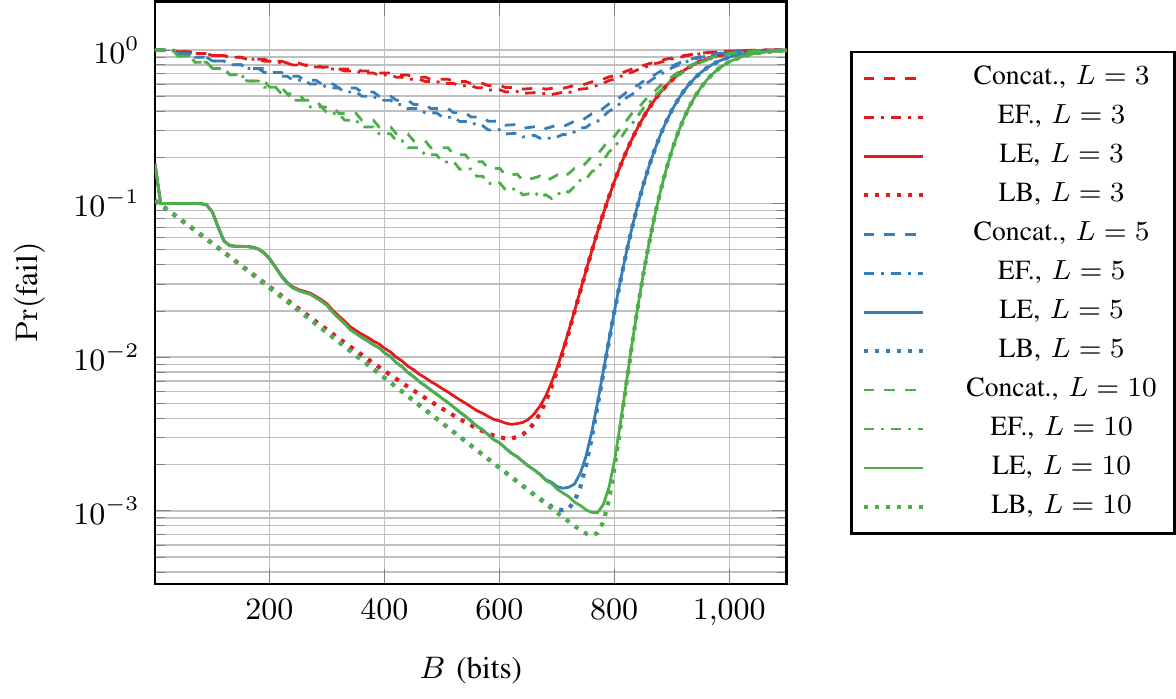}
  \caption{Failure probability for various acknowledgment message lengths $B$ for $\lambda=100$, $\epsilon_{\mathrm{ul}}=0.1$ and $\overline{\text{SNR}}=-5$ dB. The remaining parameters are the same as in \cref{fig:joint_src_ch_coding}.}
  \label{fig:joint_src_ch_coding_vs_n}
\end{figure}

\cref{fig:joint_src_ch_coding_vs_n} shows the failure probability vs. feedback message length $B$ for $\lambda=100$, $\overline{\text{SNR}}=-5$ dB, and various number of transmission rounds $L$. In general, the figure reveals the trade-off between source and channel coding. When $B$ is small, the feedback message is transmitted at a low rate and is likely to be decoded, but has a high fraction of false positives, causing a high failure probability. Conversely, when $B$ is large the message has a small fraction of false positives, but it is transmitted at a high rate which reduces the probability that it is decoded, which again leads to a high failure probability. Therefore, the optimal $B$ is obtained by balancing the message length and the transmission rate. As also suggested by the analysis in \cref{sec:src_ch_tradeoff}, for the asymptotic lower bound and the scheme based on linear equations, the feedback message length $B$ that minimizes the failure probability increases as the number of transmission rounds increases, causing $\epsilon_{\mathrm{dl}}$ to decrease. On the other hand, for the concatenation scheme, the feedback message length that minimizes the failure probability is the same independently of $L$. This is because $\epsilon_{\mathrm{fn}}$, which increases with $B$, and $\epsilon_{\mathrm{dl}}$, which decreases with $B$, both lead to the same event, namely a retransmission. This point has a high false negative probability of $\epsilon_{\mathrm{fn}}\approx 0.78$, while the outage probability is low ($\epsilon_{\mathrm{dl}}\approx 0.05$). On the other hand, the schemes that allows for false positives has $\epsilon_{\mathrm{fp}}$ in the range $0.005$ to $0.021$, while $\epsilon_{\mathrm{dl}}$ is ranges from approximately $0.01$ to $0.30$. This illustrates, in line with existing literature, the fact that the false positive probability should generally be kept smaller then the false negative probability. Despite this, the resulting failure probability is significantly smaller when false positives are allowed, compared to the case where they are not.

As a final remark, we note that the results in this section have been obtained under the assumption that the BS does not have channel state information (CSI) of the decoded users, although many massive random access schemes obtain this as part of the decoding procedure~\cite{liu18,fengler21}. When CSI is available, the BS can increase the SNR at the devices that were successful in the uplink, while the unsucessful users will experience a lower SNR. This effectively suppresses the false positive probability, leading to an even smaller failure probability.

\section{Conclusion}\label{sec:conclusion}
In this work, we have studied the use of message acknowledgments in a massive random access setting. We have shown that because of the large number of users that are active at any given time, encoding the feedback message requires a significant number of bits. To reduce this amount, we propose to allow for a small fraction of false positive acknowledgments, which results in a significant reduction in the length of the acknowledgment message. We have presented and analyzed a number of practical schemes of various complexity that can be used to realize these reductions, and shown that their performance is close to the information-theoretic optimum. With the basis of these schemes, we have studied their performance when the number of decoded users is random, and derived bounds on the false positive probability in this setting. Finally, we have studied how the schemes perform in a scenario with retransmissions, and shown, through numerical results, the extent to which reducing the feedback message length can improve the overall reliability of the random access scenarios.

\appendices
\crefalias{section}{appendix}
\section{Derivation of \cref{eq:lower}}\label{app:inf_lowerbound}
For completeness, we derive here the lower bound in \cref{eq:lower} presented as Proposition 4 in \cite{pagh01}.

Suppose we construct a feedback message that acknowledges a set of users $\mathcal{W}\subset [N]$. We are interested in finding the number of sets $\mathcal{S}$ of size $K$ that such a message can acknowledge while the requirements in terms of false positives and false negatives are satisfied. Clearly, in order to meet the false positive requirement, we must have $|\mathcal{W}|\le K+\lfloor\epsilon_{\mathrm{fp}}N\rfloor$.

Consider first the sets $\mathcal{S}$ for which the message $\mathcal{W}$ has exactly $i$ false negatives and thus $K-i$ true positives. In order for $\mathcal{W}$ to be a valid message for such a set, at least $K-i$ users of $\mathcal{S}$ must belong to $\mathcal{W}$, while the remaining $i$ users can be any of the $N-|\mathcal{W}|$ users that are not acknowledged by $\mathcal{W}$. For a given message $\mathcal{W}$, the number of such sets is $\binom{|\mathcal{W}|}{K-i}\binom{N-|\mathcal{W}|}{i}\le \binom{K+\lfloor\epsilon_{\mathrm{fp}}N\rfloor}{K-i}\binom{N}{i}$.
Thus, the number of sets with up to $\lfloor\epsilon_{\mathrm{fn}}K\rfloor$ false negatives is at most
\ifdefined\SINGLECOL
\begin{equation}
  \sum_{i=0}^{\lfloor\epsilon_{\mathrm{fn}}K\rfloor} \binom{K+\lfloor\epsilon_{\mathrm{fp}}N\rfloor}{K-i}\binom{N}{i} \le  K\binom{K+\lfloor\epsilon_{\mathrm{fp}}N\rfloor}{K-\lfloor\epsilon_{\mathrm{fn}}K\rfloor}\binom{N}{\lfloor\epsilon_{\mathrm{fn}}K\rfloor}.
\end{equation}
\else
\begin{multline}
  \sum_{i=0}^{\lfloor\epsilon_{\mathrm{fn}}K\rfloor} \binom{K+\lfloor\epsilon_{\mathrm{fp}}N\rfloor}{K-i}\binom{N}{i} \le\\
  K\binom{K+\lfloor\epsilon_{\mathrm{fp}}N\rfloor}{K-\lfloor\epsilon_{\mathrm{fn}}K\rfloor}\binom{N}{\lfloor\epsilon_{\mathrm{fn}}K\rfloor}.
\end{multline}
\fi
The total number of bits to represent all $\binom{N}{K}$ possible sets $\mathcal{S}$ is therefore at most
\begin{align}
  B_{\text{fp,fn}}^* &\ge  \log_2\left(\frac{\binom{N}{K}}{K\binom{K+\lfloor\epsilon_{\mathrm{fp}}N\rfloor}{K-\lfloor\epsilon_{\mathrm{fn}}K\rfloor}\binom{N}{\lfloor\epsilon_{\mathrm{fn}}K\rfloor}}\right)\\
  &= \log_2\binom{N}{K}-\log_2\left(K\binom{K+\lfloor\epsilon_{\mathrm{fp}}N\rfloor}{\lceil (1-\epsilon_{\mathrm{fn}})K\rceil}\binom{N}{\lfloor\epsilon_{\mathrm{fn}}K\rfloor}\right).
\end{align}
Note that this bound is valid only when $\epsilon_{\mathrm{fp}}<1/2$, as it otherwise might be beneficial to encode the users that should \emph{not} be acknowledged instead of the users that should.

\bibliographystyle{IEEEtran}
\bibliography{bibliography}

\begin{thebibliography}{10}
\providecommand{\url}[1]{#1}
\csname url@samestyle\endcsname
\providecommand{\newblock}{\relax}
\providecommand{\bibinfo}[2]{#2}
\providecommand{\BIBentrySTDinterwordspacing}{\spaceskip=0pt\relax}
\providecommand{\BIBentryALTinterwordstretchfactor}{4}
\providecommand{\BIBentryALTinterwordspacing}{\spaceskip=\fontdimen2\font plus
\BIBentryALTinterwordstretchfactor\fontdimen3\font minus
  \fontdimen4\font\relax}
\providecommand{\BIBforeignlanguage}[2]{{%
\expandafter\ifx\csname l@#1\endcsname\relax
\typeout{** WARNING: IEEEtran.bst: No hyphenation pattern has been}%
\typeout{** loaded for the language `#1'. Using the pattern for}%
\typeout{** the default language instead.}%
\else
\language=\csname l@#1\endcsname
\fi
#2}}
\providecommand{\BIBdecl}{\relax}
\BIBdecl

\bibitem{bockelmann16}
C.~Bockelmann, N.~Pratas, H.~Nikopour, K.~Au, T.~Svensson, C.~Stefanovic,
  P.~Popovski, and A.~Dekorsy, ``Massive machine-type communications in {5G}:
  physical and mac-layer solutions,'' \emph{IEEE Commun. Mag.}, vol.~54, no.~9,
  pp. 59--65, 2016.

\bibitem{stefanovic13}
C.~Stefanovic and P.~Popovski, ``{ALOHA} random access that operates as a
  rateless code,'' \emph{IEEE Trans. Commun.}, vol.~61, no.~11, pp. 4653--4662,
  2013.

\bibitem{paolini15}
E.~Paolini, G.~Liva, and M.~Chiani, ``Coded slotted {ALOHA}: A graph-based
  method for uncoordinated multiple access,'' \emph{IEEE Trans. Inf. Theory},
  vol.~61, no.~12, pp. 6815--6832, 2015.

\bibitem{liu18}
L.~Liu, E.~G. Larsson, W.~Yu, P.~Popovski, C.~Stefanovic, and E.~de~Carvalho,
  ``Sparse signal processing for grant-free massive connectivity: A future
  paradigm for random access protocols in the internet of things,'' \emph{IEEE
  Signal Process. Mag.}, vol.~35, no.~5, pp. 88--99, 2018.

\bibitem{amalladinne20}
V.~K. Amalladinne, J.-F. Chamberland, and K.~R. Narayanan, ``A coded compressed
  sensing scheme for unsourced multiple access,'' \emph{IEEE Trans. Inf.
  Theory}, vol.~66, no.~10, pp. 6509--6533, 2020.

\bibitem{fengler21}
A.~Fengler, P.~Jung, and G.~Caire, ``Pilot-based unsourced random access with a
  massive mimo receiver in the quasi-static fading regime,'' in \emph{2021 IEEE
  22nd Int. Workshop Signal Process. Advances Wireless Commun. (SPAWC)}, 2021,
  pp. 356--360.

\bibitem{caire01arq}
G.~Caire and D.~Tuninetti, ``The throughput of {hybrid-ARQ} protocols for the
  gaussian collision channel,'' \emph{IEEE Trans. Inf. Theory}, vol.~47, no.~5,
  pp. 1971--1988, 2001.

\bibitem{wu10arq}
P.~Wu and N.~Jindal, ``Performance of {hybrid-ARQ} in block-fading channels: A
  fixed outage probability analysis,'' \emph{IEEE Trans. Commun.}, vol.~58,
  no.~4, pp. 1129--1141, 2010.

\bibitem{shirvanimoghaddam17}
M.~Shirvanimoghaddam, M.~Dohler, and S.~J. Johnson, ``Massive multiple access
  based on superposition raptor codes for cellular {M2M} communications,''
  \emph{IEEE Trans. Wireless Commun.}, vol.~16, no.~1, pp. 307--319, 2017.

\bibitem{ostman21}
J.~Östman, R.~Devassy, G.~Durisi, and E.~G. Ström, ``Short-packet
  transmission via variable-length codes in the presence of noisy stop
  feedback,'' \emph{IEEE Trans. Wireless Commun.}, vol.~20, no.~1, pp.
  214--227, 2021.

\bibitem{kang21}
J.~Kang and W.~Yu, ``Minimum feedback for collision-free scheduling in massive
  random access,'' \emph{IEEE Trans. Inf. Theory}, pp. 1--1, 2021.

\bibitem{yang16csack}
X.~Yang, X.~Wang, and J.~Zhang, ``Compressed sensing based {ACK} feedback for
  grant-free uplink data transmission in {5G} {mMTC},'' in \emph{IEEE Annu.
  Int. Symp. Pers., Ind., Mobile Radio Comun. (PIMRC)}, 2016, pp. 1--5.

\bibitem{scalise13smim}
S.~Scalise, C.~P. Niebla, R.~De~Gaudenzi, O.~Del Rio~Herrero, D.~Finocchiaro,
  and A.~Arcidiacono, ``{S-MIM}: a novel radio interface for efficient
  messaging services over satellite,'' \emph{IEEE Commun. Mag.}, vol.~51,
  no.~3, pp. 119--125, 2013.

\bibitem{draper08}
S.~C. Draper and A.~Sahai, ``Variable-length channel coding with noisy
  feedback,'' \emph{Eur. Trans. Telecommun.}, vol.~19, no.~4, pp. 355--370,
  2008.

\bibitem{wu11}
P.~Wu and N.~Jindal, ``Coding versus {ARQ} in fading channels: How reliable
  should the {PHY} be?'' \emph{IEEE Trans. Commun.}, vol.~59, no.~12, pp.
  3363--3374, 2011.

\bibitem{starzetz2009hashing}
P.~Starzetz, M.~Heusse, F.~Rousseau, and A.~Duda, ``Hashing backoff: A
  collision-free wireless access method,'' in \emph{Int. Conf. Res.
  Netw.}\hskip 1em plus 0.5em minus 0.4em\relax Springer, 2009, pp. 429--441.

\bibitem{pratas16bloom}
N.~K. Pratas, C.~Stefanovic, G.~C. Madueno, and P.~Popovski, ``Random access
  for machine-type communication based on {Bloom} filtering,'' in \emph{Proc.
  IEEE Global Commun. Conf (GLOBECOM)}, 2016, pp. 1--7.

\bibitem{zhang18bloom}
Z.~Zhang, ``Novel {PRACH} scheme for {5G} networks based on analog {Bloom}
  filter,'' in \emph{Proc. IEEE Global Commun. Conf. (GLOBECOM)}, 2018, pp.
  1--7.

\bibitem{luo20bloom}
Z.~Luo, F.~Wang, M.~Ding, and Y.~Chen, ``Enabling massive access of {IoT} by
  superposition {Bloom} filter and compressed sensing,'' in \emph{Proc. IEEE
  Veh. Tech. Conf. (VTC)}, 2020, pp. 1--5.

\bibitem{vem19ura}
A.~Vem, K.~R. Narayanan, J.-F. Chamberland, and J.~Cheng, ``A user-independent
  successive interference cancellation based coding scheme for the unsourced
  random access gaussian channel,'' \emph{IEEE Transactions on Communications},
  vol.~67, no.~12, pp. 8258--8272, 2019.

\bibitem{ts36321}
3GPP, ``Medium access control ({MAC}) protocol specification,'' {3rd Generation
  Partnership Project (3GPP)}, TS 36.321, Sep. 2021, v16.6.0.

\bibitem{polyanskiy17}
Y.~Polyanskiy, ``A perspective on massive random-access,'' in \emph{2017 IEEE
  Int. Symp. Inf. Theory (ISIT)}, 2017, pp. 2523--2527.

\bibitem{schalkwijk72}
J.~Schalkwijk, ``An algorithm for source coding,'' \emph{IEEE Trans. Inf.
  Theory}, vol.~18, no.~3, pp. 395--399, 1972.

\bibitem{cover73}
T.~Cover, ``Enumerative source encoding,'' \emph{IEEE Trans. Inf. Theory},
  vol.~19, no.~1, pp. 73--77, 1973.

\bibitem{pagh01}
R.~Pagh and F.~F. Rodler, ``Lossy dictionaries,'' in \emph{Eur. Symp.
  Algorithms}.\hskip 1em plus 0.5em minus 0.4em\relax Springer, 2001, pp.
  300--311.

\bibitem{carter78}
L.~Carter, R.~Floyd, J.~Gill, G.~Markowsky, and M.~Wegman, ``Exact and
  approximate membership testers,'' in \emph{Proc. tenth annu. ACM Symp. Theory
  Comp. (STOC)}.\hskip 1em plus 0.5em minus 0.4em\relax ACM Press, 1978.

\bibitem{dietzfelbinger08}
M.~Dietzfelbinger and R.~Pagh, ``Succinct data structures for retrieval and
  approximate membership,'' in \emph{Int. Colloq. Automata, Languages, and
  Program.}\hskip 1em plus 0.5em minus 0.4em\relax Springer, 2008, pp.
  385--396.

\bibitem{erdos74}
P.~Erd\H{o}s and J.~Spencer, \emph{Probabilistic methods in
  combinatorics}.\hskip 1em plus 0.5em minus 0.4em\relax Academic Press New
  York, 1974.

\bibitem{porat2009optimal}
E.~Porat, ``An optimal bloom filter replacement based on matrix solving,'' in
  \emph{Int. Comput. Sci. Symp. Russia}.\hskip 1em plus 0.5em minus 0.4em\relax
  Springer, 2009, pp. 263--273.

\bibitem{chung_mitzenmacher_vadhan_2021}
K.-M. Chung, M.~Mitzenmacher, and S.~Vadhan, ``When simple hash functions
  suffice,'' in \emph{Beyond the Worst-Case Analysis of Algorithms},
  T.~Roughgarden, Ed.\hskip 1em plus 0.5em minus 0.4em\relax Cambridge
  University Press, 2021, p. 567–585.

\bibitem{BuRR2021}
P.~C. Dillinger, L.~H{\"u}bschle-Schneider, P.~Sanders, , and S.~Walzer, ``Fast
  succinct retrieval and approximate membership using ribbon,'' \emph{CoRR},
  vol. abs/2106.12270, 2021.

\bibitem{bloom70}
B.~H. Bloom, ``Space/time trade-offs in hash coding with allowable errors,''
  \emph{Commun. ACM}, vol.~13, no.~7, p. 422–426, Jul 1970.

\bibitem{broder04}
A.~Broder and M.~Mitzenmacher, ``Network applications of bloom filters: A
  survey,'' \emph{Internet Math.}, vol.~1, no.~4, p. 485–509, Jan 2004.

\bibitem{mitzenmacher2017probability}
M.~Mitzenmacher and E.~Upfal, \emph{Probability and computing: Randomization
  and probabilistic techniques in algorithms and data analysis}.\hskip 1em plus
  0.5em minus 0.4em\relax Cambridge university press, 2017.

\end{thebibliography}
\end{document}